\newtheorem{remark}{Remark}
\newtheorem{lemma}{Lemma}
\newtheorem{example}{Example}
\newcolumntype{C}{>{\centering\arraybackslash}X}
\pgfplotsset{compat=1.18}
\newcommand{\ie}{i.e.}
\newcommand{\emphsec}[1]{\emph{#1}:}
\newcommand{\ra}{\rightarrow}
\DeclareMathOperator{\prob}{Pr}
\DeclareMathOperator{\erfc}{erfc}
\newcommand{\vc}[1]{\mathbf{#1}}
\newcommand{\norm}[1]{\lVert #1 \rVert}
\newcommand{\iid}{i.i.d.}
\newcommand{\eqdef}{\stackrel{\Delta}{=}}
\newcommand{\E}{\mathbb{E}}
\newcommand{\eg}{e.g.,}
\newcommand{\frommton}[2]{[#1: #2]}
\newcommand{\fromoneton}[1]{[#1]}
\DeclareMathOperator{\argmin}{argmin}
\newcommand{\matdet}[1]{\mathrm{#1}}
\newcommand{\FF}{\mathbb{F}_2}
\newcommand{\QF}{\text{QF}}
\newcommand{\ber}{\text{BER}}
\newcommand{\ubar}[1]{\underaccent{\bar}{#1}}
\title{Adaptive Learned Belief Propagation for Decoding Error-Correcting Codes}
\author{
Alireza Tasdighi and Mansoor Yousefi
\thanks{Alireza Tasdighi and Mansoor Yousefi are with the Communications and Electronics Department of Télécom Paris (Institut Polytechnique de Paris), 
Paris, France. E-mails:\{alireza.tasdighi,yousefi\}@telecom-paris.fr.
\\
This work has received funding from the European Research Council under the European Union's 
Horizon 2020 research and innovation programme, Grant Agreement No. 805195.
}
}
\begin{document}
\maketitle

%%%%%%%%%%%%%%%%%%%%%%%%%%%%%%%%%%
%%%% ABSTRACT
%%%%%%%%%%%%%%%%%%%%%%%%%%%%%%%%%%

\begin{abstract}  
Weighted belief propagation (WBP) for the decoding of   linear block codes is considered.
In WBP, the Tanner graph of the code is unrolled with respect to the iterations of the belief propagation decoder.
Then, weights are assigned to the edges of the resulting recurrent network and optimized offline using a training dataset.
The main contribution of this paper is an adaptive WBP where the weights of the decoder
are determined for each received word. Two variants of this decoder are investigated.
In the parallel WBP decoders, the weights take values in a discrete set. A number of WBP decoders are run in
parallel to search for the best sequence- of weights in real time.
In the two-stage decoder,
a small neural network is used to dynamically determine the weights of the WBP decoder for each received word.
The proposed adaptive decoders demonstrate significant improvements over the static counterparts in two applications.
In the first application, Bose--Chaudhuri--Hocquenghem, polar and quasi-cyclic low-density parity-check (QC-LDPC) codes are used over an additive white Gaussian
noise channel.
The results indicate that the adaptive WBP achieves bit error rates (BERs) up to an order of magnitude less than
the BERs of the static WBP at about the same decoding complexity, depending on the code, its rate, and the signal-to-noise ratio.
The second application is a concatenated code designed for a long-haul nonlinear optical fiber channel 
where the inner code is a QC-LDPC code and the outer code is a spatially coupled LDPC code.
In this case, the inner code is decoded using an adaptive WBP, while the outer code is decoded using the sliding window decoder and
static belief propagation.
The results show that the adaptive WBP provides
a coding gain of 0.8 dB compared to the neural normalized min-sum decoder, with about the same computational
complexity and decoding latency.

\end{abstract}

\begin{IEEEkeywords}
Belief propagation; neural networks; low-density parity-check codes; optical fiber communication.
\end{IEEEkeywords}

%%%%%%%%%%%%%%%%%%%%%%%%%%%%%%%%%%
%%%% SECTION I: Introduction
%%%%%%%%%%%%%%%%%%%%%%%%%%%%%%%%%%

\section{Introduction}
\label{Intro}
 Neural networks (NNs) have been widely studied to improve communication systems.
The ability of NNs to learn from data and model complex relationships makes them indispensable tools for tasks such as equalization,
monitoring, modulation classification, and beamforming \cite{Survey2021AI}.
While NNs have also been considered for decoding   error-correcting codes for quite some time
\cite{J.Bruck89,G.Zeng89,W.R.Caid90,tseng1993,G.Marcone95,X.A.Wang96,G.Tallini95,M.Ibnkahla2000,haroon2013},
interest in this area has surged
significantly in recent years due to advances in NNs and their widespread commercialization
~\cite{E.Nachmani2016,T.Gruber2017, H.Kim2018,vasic2018learning,A.Bennatan2018, E.Nachmani18, Lugosch18,M.Lian2019,
jiang2019deepturbo,carpi2019RL,Q.Wang2020,  huang2019ai,Be2020,xu_deep_2020,A.Buchberger2020,J.Dai2021,
T.Tonnellier2021,L.Wang2021,habib2021belief,nachmani2021autoregressive,E.Nachmani2022,cammerer2020trainable,
cammerer2022graph,Y.Choukroun2022,jamali2022productae,dorner2022learning, li2023bottom,Q.Wang2023,wang2023ldpc,
clausius2023component,Y.Choukroun2024a,Y.Choukroun2024b,clausius2024graph,S.Adiga2024,kim2024neural,ninkovic2024decoding}.

Two categories of neural decoders may be considered.
In model-agnostic decoders, the NN has a general architecture  independent of the conventional decoders in coding theory~\cite{T.Gruber2017,Y.Choukroun2022,Y.Choukroun2024b}.
Many of the common architectures have been studied for decoding,
including   multi-layer perceptrons \cite{tseng1993,T.Gruber2017,cammerer2017scaling}, convolutional NNs (CNNs) \cite{sagar1994}, recurrent neural networks (RNNs) \cite{H.Kim2018},
autoencoders \cite{jiang2019deepturbo,clausius2023component,jamali2022productae}, convolutional decoders \cite{G.Marcone95}, graph NNs \cite{cammerer2022graph,clausius2024graph}, and
transformers \cite{Y.Choukroun2024a,Y.Choukroun2022}.
These models have been used to decode   linear block codes \cite{W.R.Caid90,T.Gruber2017},
Reed--Solomon codes \cite{hussain1991reed},
convolutional codes \cite{W.R.Caid90, alston1990, sagar1994},
Bose--Chaudhuri--Hocquenghem (BCH) codes \cite{E.Nachmani18,E.Nachmani2016,M.Lian2019,E.Nachmani2022},
Reed--Muller codes \cite{tseng1993, A.Buchberger2020},  
turbo codes \cite{H.Kim2018}, low-density parity-check (LDPC) codes  \cite{li2023bottom,X.Wu2018}, and polar codes  \cite{Miloslavskaya2024}.

Training neural decoders is challenging because the number of codewords to classify depends exponentially
on the number of information bits.
Furthermore, the sample complexity of the NN is high for the small bit error rates (BERs) and long block lengths required in
some applications.
As a  consequence,   model-agnostic decoders often require a large number of parameters and may overfit,
which makes them impractical unless the block length is  short.

In model-based neural decoders, the architecture of the NN is based on the structure of a conventional decoder
\cite{E.Nachmani2016,vasic2018learning,doan2019neural,A.Buchberger2020,L.Wang2021,E.Nachmani2022}.
An example is  weighted belief propagation (WBP), where the messages exchanged across the edges of the Tanner graph of the code are
weighted and optimized \cite{E.Nachmani2016,E.Nachmani2022,C.Yang2023}.
This gives rise to a decoder in the form of a recurrent network obtained by unfolding the update equations of the belief propagation (BP)  over the iterations.
Since the WBP is a biased model, it has fewer parameters than the model-agnostic NNs at the same accuracy.

Prior work has demonstrated that the WBP outperforms BP for block lengths up to around 1000,
particularly with structured codes, low-to-moderate code rates,
and high signal-to-noise ratios (SNRs) \cite{E.Nachmani2022,Lugosch18,L.Wang2021,Q.Wang2023,C.Yang2023,li2023bottom,M.Wang2023,Q.Wang2023,wang2023ldpc,raviv2024crc}.
It is believed that the improvement is achieved by altering the log-likelihood ratios (LLRs)  that are passed along short cycles.
For example, for BCH and LDPC codes with block lengths under 200,
WBP provides frame error rate (FER) improvements of up to 0.4 dB in the waterfall region and up to 1.5 dB in the error-floor region \cite{Tomer2020,Be2020,kwak2023boosting}.
Protograph-based (PB) QC-LDPC codes have been similarly decoded using the learned
weighted min-sum (WMS) decoder \cite{L.Wang2021}.

The WBP does not generalize well at low bit error rates (BERs) due to the requirement of long block lengths and the resulting
high sample and training complexity \cite{S.Adiga2024}.
For example,  in optical fiber communication, the block length can be up to tens of thousands to achieve a BER of $10^{-15}$.
In this case, the sample complexity of WBP is high, and the model does not generalize well when trained with a practically manageable number of examples.

The training complexity and storage requirements of the WBP can be reduced through parameter sharing.
Lian et al. introduced a WBP decoder wherein the parameters are shared across or within the layers of the NN  \cite{M.Lian2019,wang2023ldpc}.
A number of parameter-sharing schemes in WBP are studied in \cite{L.Wang2021,wang2023ldpc}.
Despite intensive research in recent years, WBP remains impractical in most real-world applications.

In this work, we improve the generalization of WBP to enhance its practical applicability.
The WBP is a static NN, trained offline based on a  dataset.
The main contribution of this paper is the proposal of adaptive learned message-passing algorithms, where the weights assigned to messages are determined for each received word.
In this case, the decoder is dynamic, changing its parameters for each transmission in real time.

Two variants of this decoder are proposed. In the parallel decoder architecture, the weights take values in a discrete set.
A number of WMS decoders are run in parallel to find the best sequence of weights based on the Hamming weight
of the syndrome of the received word.
In the two-stage decoder, a secondary NN is trained to compute the weights to be used in the primary NN decoder.
The secondary NN is a CNN that takes the LLRs of the received word and is optimized offline.

The performance and computational complexity of the static and adaptive decoders are compared in two applications.
In the first application, a number of regular and irregular quasi-cyclic low-density parity-check (QC-LDPC) codes,
along with a BCH  and a polar code, are evaluated over an additive white Gaussian noise (AWGN) channel in both low- and high-rate regimes.
The results indicate that the adaptive WMS decoders achieve decoding BERs up to an order of magnitude less than the BERs of the static WMS
decoders, at about the same decoding complexity, depending on the code, its rate, and the SNR.
The coding gain is 0.32 dB  at a bit error rate of $10^{-4}$ in one example.

The second application is  coding over a nonlinear optical fiber link with wavelength division multiplexing (WDM).
The  data rates in today's optical fiber communication system approach terabits/s per wavelength. Here, the complexity, power consumption, and latency of the decoder are important considerations.
We apply concatenated coding by combining a low-complexity short-block-length soft-decision inner code with
a long-block-length hard-decision outer code. This approach allows the component codes to have much shorter block lengths
and higher BERs than the combined code. As a result, it becomes feasible to train the WBP for decoding the inner code,
addressing the curse of dimensionality and sample complexity issues.
For PB QC-LDPC inner codes and a spatially coupled (SC) QC-LDPC outer code, the results indicate that the adaptive
WBP outperforms the static WBP  by 0.8 dB at about the same complexity and decoding latency in a 16-QAM 8x80 km 32 GBaud WDM system with five channels.

The remainder of this paper is organized as follows.
Section \ref{sec:Notations} introduces the notation, followed by the channel models in Section \ref{sec:model}.
In Section \ref{sec:wbp}, we introduce the WBP, and in Section \ref{sec:adaptive}, two adaptive learned message-passing
algorithms. In Section \ref{sec:results}, we compare the performance and complexity of the static and adaptive decoders,
and in Section \ref{sec:conclusions}, we conclude the paper.
Appendices~\ref{subsec:qc-ldpc} and \ref{sec:sc-code} provide
supplementary information, 
and Appendix~\ref{qc-ldpc-prmts} presents the parameters of the codes.

\section{Notation} \label{sec:Notations}

Natural, real, complex and non-negative numbers are denoted by $\mathbb{N}$, $\mathbb{R}$,  $\mathbb{C}$, and $\mathbb R_+$,  respectively.
The set of integers from $m$ to $n$ is shown as $\frommton{m}{n}=\bigl\{ m, m+1, \cdots, n\bigr\} $.
The special case $m=1$ is shortened to $\fromoneton{n} \eqdef \frommton{1}{n}$. $\lfloor x \rfloor$ and $\lceil x \rceil$ denote, respectively, the floor and ceiling of $x\in\mathbb R$.
The Galois field GF(q) with $q\in \mathbb{N}$, $q\geq 2$, is $\mathbb{F}_q$.
The set of matrices with $m$ rows, $n$ columns, and elements in $\frommton{0}{q-1}$ is $\mathbb{F}^{m\times n}_q$.

A sequence of length $n$ is denoted as $x^n =(x_1, x_2, \cdots, x_n)$.
Deterministic vectors are denoted by boldface font, \eg\ $\vc{x}\in\mathbb{R}^n$.
The $i^{\text{th}}$ entry of $\vc x$ 
is $[\vc x]_i$.
Deterministic matrices are shown by upper-case letters with mathrm font, \eg\ $\matdet{A}$.

The probability density function (PDF) of a random variable $X$ is denoted by ${\Pr}_X(x)$, shortened to $\Pr(x)$
if there is no ambiguity.
The conditional PDF of $Y$ given $X$ is ${\Pr}_{Y|X}(y|x)$.
The expected value of a random variable $X$ is denoted by $\E(X)$.
The real Gaussian PDF with mean $\theta$ and standard deviation $\sigma$ is denoted by $N(\theta,{\sigma}^2)$.
The $Q$ function is $Q\left(x\right) = \frac{1}{2} \erfc\left(\frac{x}{\sqrt{2}}\right)$, where
$\erfc(x)$ is the complementary error function.
The binary entropy function is
$H_b(x) = -(x\log_2(x) + (1-x)\log_2(1-x))$, $x\in (0, 1)$.

\section{Channel Models}
\label{sec:model}

\subsection{AWGN  Channel}
\label{ChanModl_AWGN_En_De}

\emphsec{Encoder}
We consider an $(n, k)$ binary linear code $\mathcal{C}$ with the parity-check matrix (PCM) $\matdet{H} \in \mathbb{F}_2^{m \times n}$,
where $n$ is the code length, $k$ is the code dimension, and $m \geq n - k$, $m\in\mathbb N$.
The rate of the code is $r = k/n \geq 1 - \frac{m}{n}$.
A PB QC-LDPC code is characterized by a lifting factor $M\in\mathbb{N}$, a base matrix $\matdet{B}\in \{-1, 0, 1\}^{\lambda \times \omega}$ , and  an exponent
matrix $\matdet{P} \in \{0, 1, \cdots, M-1\}^{\lambda \times \omega}$ where, $\lambda, \omega\in\mathbb{N}$,
$\lambda<\omega$.
Given $(\lambda, \omega, M, \matdet{P})$, the PCM is obtained according to the procedure in Appendix~\ref{subsec:qc-ldpc}.

We evaluate a BCH code, seven regular and irregular QC-LDPC codes, and  a polar code in the low- and high-rate regimes.
These codes are summarized in Table~\ref{tab:codes} and described in Section~\ref{sec:AWGN-results}.
The parameters of the QC-LDPC codes are given in Appendix~\ref{qc-ldpc-prmts}.

\begin{table*}[t]
\caption{Codes in this paper.}
\label{tab:codes}
\centering
\begin{tabularx}{0.725\textwidth}{CC}
\toprule
\multicolumn{2}{c}{\textbf{AWGN Channel}} \\
\midrule
Low rate & High rate \\ \midrule
BCH $C_1(63, 36)$, $r=0.57$                        & QC-LDPC $\mathcal{C}_6(1050, 875)$, $0.83$         \\\midrule
QC-LDPC $C_2 (3224, 1612)$, $0.5$                & QC-LDPC $\mathcal{C}_7(1050, 850)$, $0.81$         \\\midrule
QC-LDPC $C_3 (4016, 2761)$, $0.69$                &  QC-LDPC $\mathcal{C}_8(4260, 3834)$, $0.9$           \\\midrule
Irregular LDPC $C_4(420, 180)$, $0.43$            &   Polar $\mathcal{C}_9(1024, 854)$, $0.83$     \\\midrule
Irregular LDPC $C_5(128, 64)$, $0.5$             &                                             \\
\midrule
\multicolumn{2}{c}{\textbf{Optical Fiber Channel}} \\
\midrule
Inner code & Outer code \\ \midrule
Single-edge QC-LDPC $\mathcal{C}_{10}(4000,3680)$, $0.92$ &  Multi-edge QC-LDPC $\mathcal{C}_{11}(3680,3520)$, $0.96$   \\\midrule
Non-binary multi-edge $\mathcal{C}_{12}(800, 32)$ &  \\\midrule
\end{tabularx}
\end{table*}

The encoder maps a sequence of information bits $\mathbf{b} = (b_{1}, b_{2}, \ldots, b_{k})$,  $b_{i}\in\{0, 1\}$,
$i\in \fromoneton{k}$, to a codeword $\mathbf{c} = (c_{1}, c_{2}, \ldots, c_{n})$
as $\vc{c} = \vc{b} \matdet{G}$, where $\matdet{G} \in \FF^{k\times n}$ is the generator matrix
of the code.

\emphsec{Channel model}~
The codeword $\mathbf{c}$ is modulated with a binary phase shift keying  with symbols $\pm A\in\mathbb R$,
and transmitted over an AWGN channel. The vector of received symbols is  $\vc{y} = (y_1, \cdots, y_n)$, where
\begin{IEEEeqnarray}{rCl}
    y_i = (-1)^{c_i} A + z_i,\quad i=1,2, 3, \cdots,
    \label{eq:awgn-channel}
\end{IEEEeqnarray}
where
$z_j \sim\iid N(0, \sigma^2)$.
If $\rho$ is the SNR, the channel can be normalized so that $A=1$, and $\sigma^2=(r \rho)^{-1}$.

The LLR function $L:\mathbb R^{n}\mapsto \mathbb R^n$ of $\vc c$ conditioned on $\vc y$ is
\begin{IEEEeqnarray}{rCl}
\left[ \matdet{L}\left(\vc y\right) \right]_i& \eqdef&
 \log\left(\frac{\Pr\left( c_i=0\mid \vc y\right)}{\Pr\left( c_i=1\mid \vc y\right)} \right)   \nonumber \\
   &=&\log\left(\frac{{\Pr}\left(y_i\mid c_i=0\right)}{{\Pr}\left(y_i\mid c_i=1\right)} \right) \label{eq:LLR-awgn-a}\\
   &=& 4r\rho y_i  \label{eq:LLR-awgn}
\end{IEEEeqnarray}
for each $i\in [n]$. Equation  \eqref{eq:LLR-awgn-a} holds under the assumption that $c_i$ are independent and uniformly distributed,
and  \eqref{eq:LLR-awgn} is obtained from Gaussian $\Pr(y_i|c_i)$ from \eqref{eq:awgn-channel}.

\emphsec{Decoder}~
We compare the performance and complexity of the static and adaptive belief propagation.
The static decoders  are tanh-based BP, the auto-regressive BP and WBP with different levels of parameter sharing,
including BP with simple scaling and parameter-adapter networks (SS-PAN) \cite{M.Lian2019}.
Additionally, to assess the achievable performance with a large number of parameters in the decoder,
we include a comparison with two model-agnostic neural decoders based on transformers \cite{Y.Choukroun2024a} and graph NNs  \cite{cammerer2022graph,clausius2024graph}.

\subsection{Optical Fiber Channel}
\label{sec:fiber-channel-model}

In this application, we consider a multi-user fiber-optic transmission system using WDM with $N_c$ users, each of
bandwidth $B_0$ Hz, as shown in Figure~\ref{fig:sys-model}.

\begin{figure*}[t]
\centering
\includegraphics[width=0.982\textwidth]{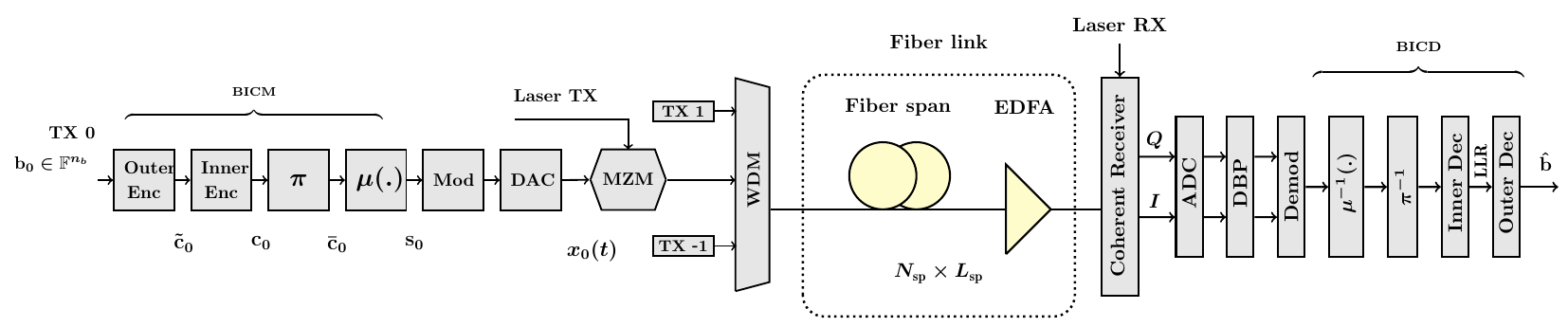}
\caption{Block diagram of an optical fiber transmission system.}
\label{fig:sys-model}
\end{figure*}

\emphsec{Transmitter (TX)}
A binary source generates a pseudo-random bit sequence  of $n_b$ information bits
$\mathbf{b}_{u} = (b_{u,1}, b_{u,2}, \ldots, b_{u,n_b})$, $b_{u, j}\in\{0, 1\}$, for the WDM channel $u\in [u_1:u_2]$,
$u_1 = -\lfloor N_c/2 \rfloor$, $u_2 = \lceil N_c/2 \rceil -1$, $j \in [1:n_b]$.
Bit-interleaved coded modulation (BICM) with concatenated coding  is applied in  WDM channels independently.
BICM comprises an outer  encoder for the code
$\mathcal{C}_{\text{o}}(n_{\text{o}},k_{\text{o}})$ with rate $r_{\text{o}}$, an inner encoder for
$\mathcal{C}_i(n_{\text{i}},k_{\text{i}})$ with rate $r_{\text{i}}$, a permuter $\pi$, and a mapper $\mu$, where $n_{\text{o}}/k_{\text{i}}$ is assumed to be an integer.
The concatenated code  $\mathcal C =\mathcal{C}_i \circ \mathcal{C}_o $ has parameters
$k = k_{\text{o}}$ and $n=n_{\text{o}}/r_{\text{i}}$ and $r_{\text{total}} = r_{\text{i}}r_{\text{o}}$.
Each consecutive subsequence of $\mathbf{b}_u$ of length $k_o$
is mapped to $\tilde{\mathbf{c}}_u\in \mathcal{C}_o\subset\{0,1\}^{n_{\text{o}}} $ by the outer encoder
and subsequently to $\mathbf{c}_u\in\mathcal{C} \subset \{0,1\}^{n}$ by the inner encoder.
Next, $\mathbf{c}_u$ is mapped to $\bar{\mathbf{c}}_u = \pi(\mathbf{c}_u)$ by a random uniform permuter $\pi: \FF^n\mapsto \FF^n $.
The mapper $\mu:\FF^m\mapsto \mathcal A$ maps consecutive sub-sequences of  $\bar{\mathbf{c}}_u$ of length $m$
to a symbol in a constellation $\mathcal A$ of size $\mathfrak{M}=2^m$.
Thus, the BICM maps $\mathbf{b}_u$ to a sequence of symbols $\mathbf{s}_{u} = (s_{u, {\ell}_1}, s_{u,{\ell}_1 + 1}, \ldots, s_{u, {\ell}_2})$,
where $s_{u,{\ell}}\in \mathcal A$, $\ell\in[\ell_1:\ell_2]$, $\ell_1 = -\lfloor n_s/2 \rfloor$, $\ell_2 = \lceil n_s/2 \rceil -1$, $n_s=n_b/m$.

The symbols $s_{u, {\ell}}$ are modulated with a root raised cosine (RRC) pulse shape $p(t)$ at symbol rate $R_s$, where $t$ is the time.
The resulting electrical signal of each channel $x_u(t)$ is converted to an optical signal and subsequently multiplexed by a WDM multiplexer. The baseband representation of the transmitted signal is
\begin{equation}
x(t) = \sum_{u=u_1}^{u_2} \sum_{{\ell}=\ell_1}^{\ell_2} s_{u, {\ell}} \: p(t - {\ell} T_s)e^{j2\pi u B_0 t},
\end{equation}
where $T_s=1/R_s$ and $s_{u, {\ell}}$ are \iid\ random variables.
The average power of the transmitted signal is $\mathcal P$; thus,  $\mathbb E|s_{u, {\ell}}|^2=\mathcal P$, $\forall u, \ell$.

\emph{Encoder:} SC LDPC codes are attractive options for optical communications \cite{Schmalen2015}.
These codes approach the capacity of the canonical communication channels \cite{ShK13,Liga17}
and have a flexible performance--complexity trade-off. They are decoded with the BP and the sliding window decoder (SWD).
Another class of codes in optical communication is the SC product-like codes, braided block codes \cite{feltstrom2009braided}, SC turbo product codes \cite{montorsi2021design}, staircase codes \cite{smith2011staircase,zhang2014staircase,zhang2017analysis} and their generalizations
\cite{shehadeh2023generalized,sukmadji2022zipper}. These codes are decoded with iterative, algebraic hard decision  algorithms and prioritize low-complexity, hardware-friendly decoding over coding gain.

In this paper, the encoding in BICM combines an inner (binary or non-binary) QC-LDPC code $\mathcal C_i$ with an outer SC QC-LDPC code $\mathcal C_o$
whose component code is a multi-edge QC-LDPC code, as outlined in Table \ref{tab:codes}.
The construction and parameters of the codes are given in Appendix \ref{Outer_code} and Appendix \ref{qc-ldpc-prmts}, respectively.

The choice of the inner code is due to the decoder complexity.
Other options have been considered in the literature, for instance,
algebraic codes, \eg\ the BCH  (Sec. 3.3~\cite{ryan2009}) or Reed-Solomon codes (Sec. 3.4~\cite{ryan2009}), or polar codes \cite{Ahmad2016}.
However, the QC-LDPC codes are simpler to decode, especially at high rates.
The outer code can  be an LDPC code  \cite{barakatain2018, Amat19}, a staircase code \cite{Zhang_Kschischang17,barakatain2018},
or a SC-LDPC code \cite{Amat19}.

\emphsec{Fiber-optic link}~
The channel is an optical fiber link  with $N_{sp}$ spans of length $L_{sp}$ of the standard single-mode fiber, with
parameters in Table~\ref{tab:SystemPara}.

\begin{table*}[t]
\caption{The parameters of the fiber-optic link.}
\centering
    \begin{tabularx}{0.45\textwidth}{CC}
        \toprule
        \textbf{Parameter Name}&\textbf{Value} \\
        \midrule 
        \multicolumn{2}{c}{Transmitter parameters} \\
        \midrule
       \text{WDM channels} & 5 \\
        \midrule
        \text{Symbol rate $R_s$} & 32 \text{Gbaud}\\
        \midrule
       \text{RRC roll-off} & 0.01 \\
        \midrule
       \text{Channel frequency spacing} & 33 \text{GHz} \\
        \midrule 
        \multicolumn{2}{c}{Fiber channel parameters}
 \\
        \midrule
       \text{Attenuation} $\left(\alpha \right)$ & 0.2 \text{dB/km} \\
        \midrule
       \text{Dispersion parameter (D)} & 17 \text{ps/nm/km}\\
        \midrule
       \text{Nonlinearity parameter} $(\gamma)$ & 1.2 \text{l/(W·km)}\\
        \midrule
       \text{Span configuration} & 8~$\times$~80 \text{km}\\
        \midrule
       \text{EDFA gain} & 16 \text{dB} \\
        \midrule
        \text{EDFA noise figure } & 5 \text{dB}\\  \bottomrule
    \end{tabularx}

    \label{tab:SystemPara}
\end{table*}

Let $q(t, z):\mathbb R\times\mathbb R^+\mapsto\mathbb C$ be the complex envelope of the  signal
as a function of time $t$ and  distance $z$ along the fiber.
The propagation of the signal in one polarization over one span of optical fiber is modeled by the nonlinear Schr\"odinger
equation \cite{agrawal2019}
\begin{IEEEeqnarray}{rCl}
    \frac{\partial q(t, z)}{\partial z} {=}
-\frac{\alpha}{2}q(t, z)
-\frac{j\beta_2}{2}\frac{\partial^2 q(t, z)}{\partial t^2} 
%\nonumber \\&&
+j\gamma|q(t, z)|^2 q(t, z),
\IEEEeqnarraynumspace
\label{eq:nls}
\end{IEEEeqnarray}
where $ \alpha$ is the loss constant, $\beta_2$ is the chromatic dispersion coefficient, $\gamma$ is the Kerr nonlinearity parameter, and $j=\sqrt -1$.
The transmitter is located at $z = 0$ and the receiver at $z = \mathcal{L}$.
The continuous-time model \eqref{eq:nls} can be discretized to a discrete-time  discrete-space model using the split-step Fourier method (Sec.~III.B~\cite{kramer2015upper}).
The optical fiber channel described by the partial differential equation \eqref{eq:nls} differs significantly from the AWGN channel
due to the presence of nonlinearity.

An erbium doped fiber amplifier (EDFA) is placed at the end of each span, which compensates for the fiber loss, and introduces amplified spontaneous emission noise.
The input $x_{i}(t)$--output $x_o(t)$ relation of the EDFA is given by $x_o(t) = G x_i(t) + n(t)$,
where $G=e^{\alpha \mathcal{L}_{sp}}$ is the amplifier's gain, and $n(t)$ is zero-mean circularly symmetric complex Gaussian noise process
with the power spectral density
\begin{IEEEeqnarray*}{rCl}
\sigma^2 = \frac{1}{2}(G-1)hf_0\textnormal{NF},
\end{IEEEeqnarray*}
where NF is the noise figure, $h$ is a Planck constant, and $f_0$ is the carrier frequency at 1550~nm.

\emphsec{Receiver}
The advent of the coherent detection paved the way for the compensation of transmission effects in optical fiber using
digital signal processing (DSP). As a result, the linear effects in the channel, such as the chromatic dispersion and polarization-induced impairments,
and some of the nonlinear effects, can be compensated with DSP.

At the receiver, a demultiplexer filters the signal of each WDM channel. The optical signal for each channel is converted to an electrical signal by a coherent receiver.
Next, DSP followed by bit-interleaved coded demodulation (BICD) is applied.
The continuous-time electrical signal is converted to the discrete-time signals by analogue-to-digital converters,
down-sampled, and passed to a digital signal processing unit for the mitigation of the channel impairments.
For equalization,   digital back-propagation (DBP) based on the symmetric split-step Fourier method is applied to compensate
for most of the linear and nonlinear fiber impairments \cite{secondini2016}.

After DSP, the symbols  are still subject to   signal-dependent noise, which is mitigated by the bit-interleaved coded demodulator (BICD).
Let $\vc{y}\in\mathbb{C}^{n_s}$ denote the equalized signal samples for the transmitted symbols $\vc{s}\in\mathcal{A}^{n_s}$  in the WDM channel of interest.
Given that the deterministic effects were equalized, we assume that the channel $\vc s\mapsto \vc y$ is memoryless
so that $\Pr(\vc y| \vc s) = \prod_{\ell=1}^{n_s} \Pr(y_{\ell}| s_{\ell})$.
For $s\in\mathcal A$, let $\mu^{-1}(s) = (b_1(s), \cdots, b_m(s))$.
From the symbol-to-symbol channel $\Pr(y | s)$, $s\in\mathcal A$, $y\in\mathbb{C}$,  we obtain $m$ bit-to-symbol channels
\begin{IEEEeqnarray}{rCl}
{\Pr}_j(y | b)= \sum\limits_{s\in\mathcal A, b_j(s)=b} \Pr( y| s),
\label{eq:per-bit-prob}
\end{IEEEeqnarray}
where $b\in\FF$, and $j \in [m]$.

Let
$\bar{\vc c} = (b_1(s_1),\cdots, b_m(s_1), \cdots, b_1(s_{n_s}), \cdots, b_m(s_{n_s}))$, $n=mn_s$.
The LLR function $L:\mathbb C^{n_s}\mapsto \mathbb R^n$ of $\vc c$ conditioned on $\vc y$ is, for each $i\in [n]$,
\begin{IEEEeqnarray}{rCl}
\left[ \matdet{L}\left(\vc y\right) \right]_i& \eqdef&
 \log\left(\frac{\Pr\left( c_i=0\mid \mathbf{y}\right)}{\Pr\left( c_i=1\mid \mathbf{y}\right)} \right)  \nonumber\\
  &=&\log\left(\frac{\Pr\left( \bar c_{i'}=0\mid \mathbf{y}\right)}{\Pr\left( \bar c_{i'}=1\mid \mathbf{y}\right)} \right)  \nonumber\\
 &=&\log\left(\frac{{\Pr}_j\left(y\mid b=0\right)}{{\Pr}_j\left(y\mid b=1\right)} \right),
 \label{eq:LLR-fiber}
\end{IEEEeqnarray}
where $i'$ is obtained from $i$ according to $\pi$,
$j= i' \mod m$,
and ${\Pr}_j(y | b)$ is defined in \eqref{eq:per-bit-prob}.

\vspace{2pt}

\emph{Decoder:} The decoding of $\mathcal{C}_i \circ \mathcal{C}_o$ consists of two steps. First, $\mathcal{C}_i$ is decoded using
an adaptive WBP in Appendix \ref{sec:sc-code}, 
which takes the soft information $\matdet L(\vc y)\in\mathbb R^n$ and corrects some errors.
Second, $\mathcal{C}_o$ is decoded using the min-sum (MS) decoder with SWD  in Appendix~\ref{sec:sc-code},
which further lowers the BER, and outputs the decoded information bits $\hat{\vc b}$.
The LLRs in the inner decoder are represented with 32 bits, and in the outer decoder are quantized at 4~bits with per-window configuration.

In optical communication, the forward error correction (FEC) overhead  6--25\% is common \cite{ITU-G709}.
Thus, the inner code typically has a high rate of $\geq$0.9 and a block length of several thousands,
achieving a BER of ${10}^{-6}$--${10}^{-2}$.
The outer code has a length of  up to tens of thousands, lowering the BER to an error floor to $\sim$$10^{-15}$.

\subsection{Performance Metrics}\label{metrics}

\emphsec{Q-factor}
The SNR per bit in the optical fiber channel is $E_b/N_o$, where $E_b=\mathcal{P}/m$  is the bit energy, and $N_o=\sigma^2 B N_{sp}$
is the total noise power in the link, where $B =B_0N_c$.
The performance of the uncoded communication system is often measured by the BER.
The $Q$-factor for a given BER is the corresponding SNR in an additive white Gaussian noise channel with
binary phase-shift keying modulation:
\begin{IEEEeqnarray*}{rCl}
\QF = 20\log_{10}\Bigl(\sqrt{2}\erfc^{-1}(2\textnormal{BER})\Bigr),\quad \text{dB}.
\end{IEEEeqnarray*}

\emphsec{Coding gain}
Let  $\text{BER}_i$ and $\QF_i$ (respectively, $\text{BER}_o$ and $\QF_o$) denote the BER and $Q$-factor
at the input (respectively, output) of the decoder.
The coding gain (CG) in dB is the reduction in the $Q$-factor
\begin{IEEEeqnarray*}{rCCCl}
\text{CG} &=& \QF_o - \QF_i
&=&20\log_{10} \text{erfc}^{-1}(2\text{BER}_o) - 20\log_{10} \text{erfc}^{-1}(2\text{BER}_i).\label{eq:cg}
 \end{IEEEeqnarray*}
The corresponding net CG (NCG) is
\begin{IEEEeqnarray}{rCl}
\text{NCG} = \text{CG} + 10\log_{10} r_{\text{total}}.
\label{eq:ncg}
\end{IEEEeqnarray}

\emphsec{Finite block-length NCG}
If $n$ is finite, the rate $r_{\text{total}}$ in \eqref{eq:ncg} may be replaced with
the information rate in the finite block-length regime \cite{Polyanskiy2010}
\begin{IEEEeqnarray*}{rCl}
C_f &\approx& C - \log_2(e)\sqrt{\frac{\ber_i(1-\ber_i)}{n}}Q^{-1}\left(\ber_o\right),
\end{IEEEeqnarray*}
where $Q(x) = \frac{1}{2} \, \text{erfc}\left(\frac{x}{\sqrt{2}}\right)$.

\section{Weighted Belief Propagation}
\label{sec:wbp}

Given a code $\mathcal C$, one can construct a bipartite Tanner graph $T_{\mathcal{C}} = (C, V, E)$,
where $C=\fromoneton{m}$,  $\fromoneton{m} \eqdef {1, 2, \dots, m}$, $V=\fromoneton{n}$,
and $E = \{(c,v) \in C \times V \mid \matdet{H}_{c,v} \neq 0\}$
are, respectively, the set of check nodes, variable nodes and the edges connecting them.
Let $V_c = \{v \in V\mid (c,v) \in E\}$, $C_v = \{c \in C\mid (c,v) \in E\}$,
and $d_{c}$ and $d_{v}$ be  the degree of $c$ and $v$ in $T_{\mathcal{C}}$, respectively.

The WBP is an iterative decoder
based on the exchange of the weighted LLRs between the variable nodes and the check nodes in $T_{\mathcal{C}}$  \cite{mezard2009,E.Nachmani2016}.
Let $\matdet{L}^{(t)}_{c2v}$ denote the extrinsic LLR from the check node $c$ to the variable node  $v$ at iteration $t$.
Define similarly $\matdet{L}^{(t)}_{v2c}$.

The decoder is initialized at $t=1$ with $\matdet{L}^{(0)}_{v2c}=  \matdet L(y_j)$, where $v$ is the $j$-th variable node, and $\matdet L(y_j)$ is obtained from \eqref{eq:LLR-awgn} or \eqref{eq:LLR-fiber}.
For iteration $t\in[T]$, the LLRs are updated in two~steps.

The check node update:
\begin{IEEEeqnarray}{rCl}
\matdet{L}^{(t)}_{c2v}  &\overset{(a)}{=}& 2 \tanh^{-1}\Bigl\{\prod_{v' \in\matdet{V}_c\setminus \{v\}} \tanh\frac{\gamma^{(t)}_{v',c}}{2}
                                       \matdet{L}^{(t-1)}_{v'2c}\Bigr\}
\nonumber\\
&\overset{(b)}{\approx}&
\ubar{\matdet L}_{c,v}
\prod_{v'\in \matdet{V}_c\setminus \{ v\}}
\gamma^{(t)}_{v',c}
\operatorname{sign}\left(\matdet{L}^{(t-1)}_{v'2c}\right),
\IEEEeqnarraynumspace
\label{eq:check-node-update}
\end{IEEEeqnarray}
where \[
\ubar{\matdet L}_{c,v} = \min\limits_{v^{\prime}\in \matdet{V}_c \setminus \{v\}} \Bigl\{
\left \lvert \matdet{L}^{(t-1)}_{v'2c}\right\lvert \Bigr\}.
\]
The equation in $(a)$ represents the update relation in the BP \cite{ryan2009}, where the LLR messages are scaled by non-negative weights $\{\gamma^{(t)}_{v, c}: v\in V,  c\in C_v,  t\in[T]\}$.
Further, $(b)$ is obtained from $(a)$ though an approximation to lower the computational cost.
The WBP and WMS decoders use $(a)$ and $(b)$, respectively.

The variable-node update:
\begin{IEEEeqnarray}{rCl}
\matdet{L}^{(t)}_{v2c} &=& \alpha^{(t)}_{v}  \matdet L\left(\vc y\right) + \sum_{c' \in \matdet{C}_{v} \setminus \{c\}}
\beta^{(t)}_{c',v}\matdet{L}^{(t-1)}_{c'2v}.
\IEEEeqnarraynumspace
\label{eq:var-node-update}
\end{IEEEeqnarray}
This is the update relation in the BP, to which the sets of non-negative weights
$\{\alpha^{(t)}_{v}: v\in V,    t\in[T]\}$
and
$\{\beta^{(t)}_{c, v}: c\in C,  v\in V_c,  t\in[T]\}$
are introduced.

At the end of each iteration $t$, a hard decision is made
\begin{IEEEeqnarray}{rCl}
\bar{y}_j=
\begin{cases}
1, &  \text{if}\;\; \matdet{L}^{(t)}_{v}<0,\\
0, & \text{if}\;\; \matdet{L}^{(t)}_{v} \geq 0,
\end{cases}
\label{SyndromCheckHard_1}
\end{IEEEeqnarray}
where
\begin{IEEEeqnarray}{rCl}
          \matdet{L}^{(t)}_{v}= \matdet L\left(y_j\right)+\sum\limits_{c\in C_v} \matdet{L}^{(t)}_{c2v}.
\label{SyndromCheckHard_2}
\end{IEEEeqnarray}
Let $\bar{\mathbf{y}}=\left(\bar y_1,\bar y_2,\cdots,\bar y_n\right)\in\FF^n$, and let $\vc s=\bar{\mathbf{y}}\matdet{H}^{T}\in\FF^m$ be the syndrome.
The algorithm stops if $\vc s =\vc{0}$ or $t=T$.

The computation in \eqref{eq:check-node-update} and \eqref{eq:var-node-update} can be expressed with an NN.
The Tanner graph  $T_{\mathcal{C}}$ is unrolled over the iterations to obtain a recurrent network with $2T$ layers (see Figure~\ref{fig:RNN}), in which
the weights $\gamma^{(t)}_{v,c}$ and $\beta^{(t)}_{c,v}$ are assigned to the edges of $T_{\mathcal C}$, and the weights $\alpha^{(t)}_v$ to the outputs
\cite{E.Nachmani18}.
The weights are obtained by minimizing a loss function evaluated over a training dataset using the standard optimizers for NNs.

\begin{figure*}[t]
\centering
\includegraphics[width=0.95\textwidth]{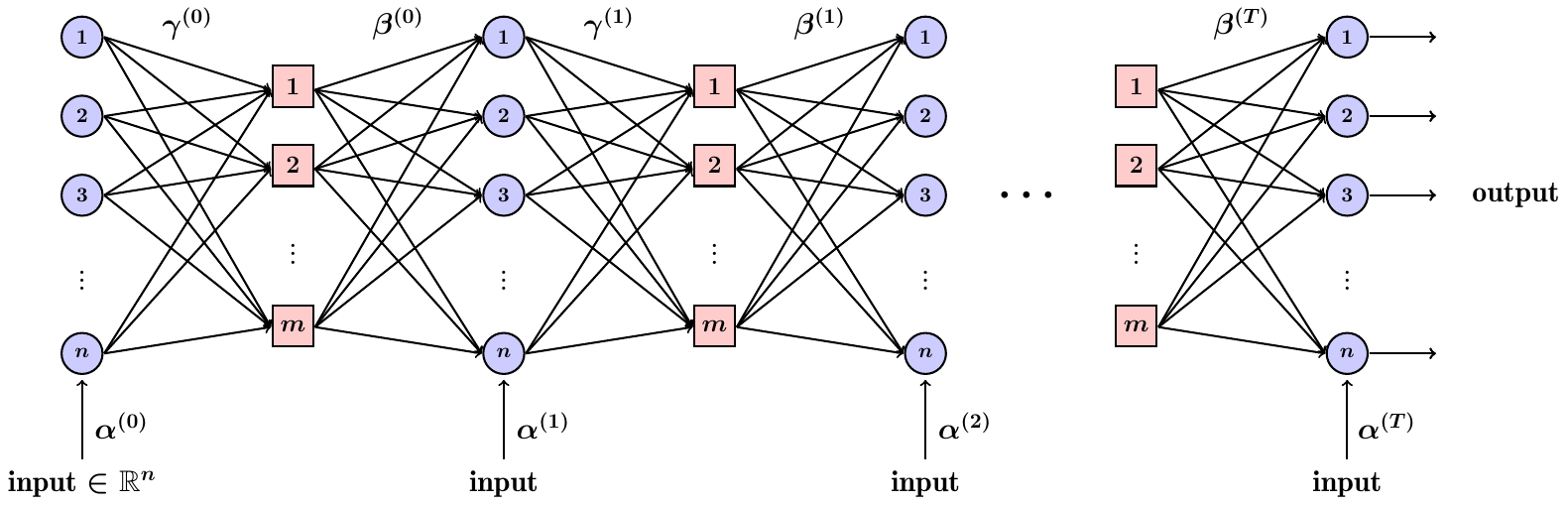}
\caption{Tanner graph unrolled to an RNN.}
\label{fig:RNN}
\end{figure*}

\subsection{Parameter Sharing Schemes}

\label{sec:param-sharing}

The training complexity of WBP can be reduced through parameter sharing at the cost of performance loss.
We consider dimensions $(t, v, c)$ for  the ragged arrays $\gamma_{v,c}^{(t)}$ and $\beta_{c,v}^{(t)}$.
In Type $T$ parameter sharing over $\gamma_{v,c}^{(t)}$,  parameters are shared with respect to iterations $t$.
In Type $T_a$ scheme, $\gamma_{v,c}^{(t)}=\beta_{c, v}^{(t)}=\gamma_{v,c}$,
$\forall\:(c, v)\in E$, $\forall\:t\in[T]$.
In this case, there is a single ragged array with $\vert E\vert$ trainable parameters $\{\gamma_{v,c}\}_{c\in C_v, v \in V_c}$.
For the regular LDPC code, $\vert E\vert = nd_v=md_c$.
It has been observed  that for typical block lengths,
indeed, the weights do not change significantly with iterations \cite{L.Wang2021}.
In Type $T_b$,  there are $T$ arrays $\gamma_{v,c}^{(t)}=\beta_{c, v}^{(t)}$, while in Type $T_c$,
there are two arrays $\gamma_{v,c}^{(t)}=\gamma_{v, c}$ and $\beta_{c, v}^{(t)}=\beta_{c, v}$.
Type $T_a$ and $T_c$ decoders can be referred to as BP-RNN decoders and Type $T_b$ as feedforward BP.
In Type $V$ sharing,  $\gamma_{v,c}^{(t)}=\gamma_{c}^{(t)}$ is independent of $v$. This corresponds to one
weight per  check node.
Likewise, in Type $C$ sharing, there is one weight per  check node update, and $\gamma_{v,c}^{(t)}=\gamma_{v}^{(t)}$.

These schemes can be combined. For instance, in Type $T_aVC$  parameter sharing, $\beta_{c,v}^{(t)}=\gamma_{v,c}^{(t)}=\gamma$.
Thus, a single parameter  $\gamma$ is introduced in all layers of the NN.
This decoder is referred to as the neural normalized BP, \eg\ neural normalized min-sum (NNMS) decoder when BP is based on the MS algorithm.
The latter is similar to the normalized MS decoder, except that the parameter $\gamma$ is empirically determined there.  
In the Type $T_bVC$ scheme, $\beta_{c,v}^{(t)}=\gamma_{v,c}^{(t)}=\gamma^{(t)}$.
Here, there is one weight per iteration. In this paper, $\alpha_v=1$ $\forall\; t\in T$.

\subsection{WBP over $\mathbb{F}_q$}
\label{sec:non-binary}

The construction and decoding of the PB QC-LDPC binary codes can be extended
to codes over a finite field $\mathbb{F}_q$ \cite{Dolecek2014,Boutillon2013}.
Here, there are $q-1$ LLR messages sent from each node, defined in Eq. 1~\cite{Boutillon2013}.
The update equations of the BP are similar to \eqref{eq:check-node-update}--\eqref{eq:var-node-update},
and presented in \cite{Boutillon2013} for the extended min-sum (EMS) and in \cite{Liang2024} for the weighted EMS (WEMS)
decoder.

The parameter sharing for the four-dimensional ragged array $\{\gamma_{v,c, q'}^{(t)}\}_{t, v\in C_v, c \in V_c, q'\in\mathbb{F}_q}$
is defined   in Section~\ref{sec:param-sharing}.  
In the check-node update of the WEMS algorithm, it is possible 
to assign a distinct weight to each coefficient $q' \in \mathbb{F}_q$ for every variable node. For instance, in the Type $T_cCQ$ scheme, 
$\gamma_{v,c,q'}^{(t)}=\gamma_{v}$ and $\beta_{c,v,q'}^{(t)}=\beta_{v}$, so 
there is one weight per variable and one per check node $\forall\; t\in T$. In the case of Type $T_bVCQ$, there is only one weight per 
  variable, iteration, and  coefficient. In this case, if BP is based on the non-binary EMS algorithm, the 
decoder is called the neural normalized EMS (NNEMS).

\begin{remark}
The EMS decoder has a truncation factor in $\{1, 2, \cdots, q\}$ that
provides a trade-off between complexity and accuracy.
In this paper, it is set to $q$ to investigate the maximum performance.
\end{remark}

\section{Adaptive Learned Message Passing Algorithms}
\label{sec:adaptive}

The weights of the static WBP are obtained by training the network offline using a dataset.
A WBP where the weights are determined for each received word  $\vc y$ is an adaptive WBP.
The weights must therefore be found by online optimization.
To manage the complexity, we consider a WMS decoder with Type $T_a$ parameter sharing.
Thus, the decoder has one weight per   $T$ iterations, which must be determined for a received $\vc y$.

Let $\vc c\in\FF^n$ be a codeword and $\mathbf y\in U$ be the corresponding received word, where $U=\mathbb R^n$ 
for the AWGN channel and $U=\mathbb C^{n_s}$ for the optical fiber channel.
Let $\bar{\mathbf y} = \mathcal{D}_{\pmb{\gamma}}(\vc y)$ be the word decoded by a Type $T_a$ WBP decoder 
with weight  $\gamma^{(t)}\in \mathbb R^+$ in iteration $t\in [T]$, where 
$\pmb{ \gamma} = ( \gamma^{(1)}, \gamma^{(2)}, \cdots, \gamma^{(T)})$.
In the adaptive decoder, we wish to find a function 
$g: U\mapsto \mathcal X$, $\mathcal X\subseteq\mathbb R_+^T$, $\pmb{\gamma} = g(\vc y)$ 
that minimizes the probability that $\mathcal{D}_{g(\vc y)}(\vc y)$ makes an~error
\begin{IEEEeqnarray}{rCl}
\min_{g\in\mathcal H} \text{Pr}\bigl(\bar{\mathbf y}\neq  \vc c\bigr), 
\label{eq:gamma-opt}
\end{IEEEeqnarray}
where $\mathcal H $ is a functional class. The static decoder is a special case where  $g(.)$
is a constant function.
Two variants of this decoder are proposed, illustrated in Figure \ref{fig:adaptive-decoders}.

\begin{figure}[t]
\centering 
  \begin{tabular}{c}
      \includegraphics[scale=0.7]{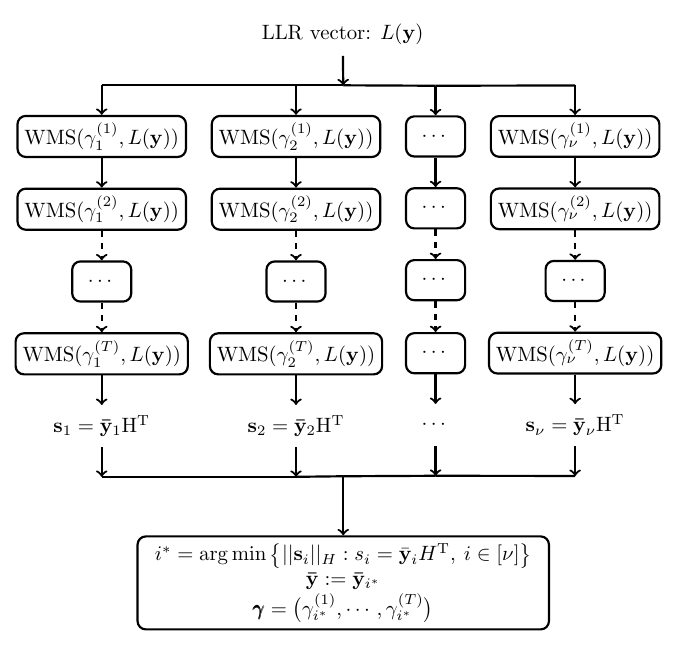} \\[2mm]
      (a) \\[2mm]
      \includegraphics[scale=0.75]{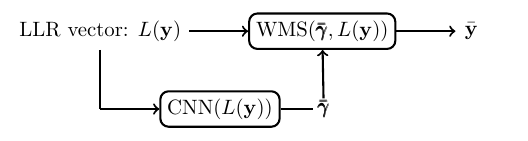}\\[2mm]
      (b)
\end{tabular}
    \caption{Adaptive decoders. (a) Parallel decoders; (b) the two-stage decoder. $\text{WMS}(\pmb{\gamma}, L(\vc y))$ refers 
        to $\mathcal{D}_{\pmb{\gamma}}(\vc y)$.
    }
    \label{fig:adaptive-decoders}
\end{figure}

\subsection{Parallel Decoders}

\emphsec{Architecture}
In parallel decoders, $g(.)$ is found through searching. Here, $\gamma^{(t)}$ takes value in a discrete set $\mathcal X_t = \bigl\{x_1^{(t)}, x_2^{(t)}, \cdots, x_{K_t}^{(t)} \bigr\}$,
$K_t \in \mathbb{N}$, and thus  
$\pmb{\gamma}\in \mathcal X :=\prod_{t=1}^T \mathcal{X}_t = \bigl\{ \pmb{\gamma}_1, \cdots, \pmb{\gamma}_{\nu} \bigr\}$.
The parallel decoders consist of $\nu$ independent decoders $\bar{\vc y}_i= \mathcal D_{\pmb{\gamma}_{i}}(\vc y)$, $i\in[\nu]$, running concurrently.
Since $\text{Pr}\bigl(\bar{\mathbf y}_i\neq  \vc c\bigr)$ in \eqref{eq:gamma-opt} is generally intractable, a sub-optimal $g(.)$ is selected as follows. 
At the end of decoding by $\mathcal D_{\pmb{\gamma}_{i}}$ , the syndrome $\vc s_i= \bar{\mathbf y}_iH^T$ is computed. 
Let
\begin{IEEEeqnarray}{rCl}
i^* = \argmin_{i\in [\nu]} {\left\lbrace ||\mathbf{s}_{i}||_H: \:\:\vc s_i= \bar{\mathbf y}_iH^T \right\rbrace},
\label{eq:i-star}
\end{IEEEeqnarray}
be the index of the decoder whose syndrome  has the smallest Hamming weight.
Then, $g(\vc y) = \pmb{\gamma}_{i^*}$, and $\bar{\mathbf y} = \mathcal{D}_{\pmb{\gamma}_{i^*}}(\vc y)$.

In practice, the search can be performed up to depth $T_1 = 5$ iterations. However, the BP decoder often has to run for more iterations.
Thus, a WBP decoder with weights $\pmb{\gamma}_{i^*}$ can continue the output with $T_2$ iterations.

\begin{remark}
The decoder obtained via  \eqref{eq:i-star} is generally sub-optimal.  
Minimizing $ \norm{\vc s_i}_H$ does not necessarily minimize the number of errors.
However, for random codes, the decoder obtained from \eqref{eq:i-star} outperforms the static decoder.
\end{remark}

\begin{remark}
\label{Remark3}
If $D_{\pmb{\gamma}_1}$ and $D_{\pmb{\gamma}_2}$ yield the same number of errors,
the decoder with the smaller weight vector is selected, which tends to output smaller LLRs.
\end{remark}

\emphsec{Obtaining $x_k^{(t)}$ from the distribution of weights.}
The values of $x_k^{(t)}$ can be determined by dividing a sub-interval in $[0,1]$ uniformly.
The resulting parallel WMS decoder outperforms WMS; however, the performance can be
improved by choosing $x_k^{(t)}$ based on the probability distribution of the weights.

The probability distribution of the channel noise induces a distribution on $\vc y$ and
consequently on $\pmb{\gamma}=g(\vc y)$.
Let $\Gamma^{(t)}$ be a random variable representing
$\gamma^{(t)}$. Denote the corresponding mean by $\theta^{(t)}$, standard deviation by $\sigma^{(t)}$, and the cumulative distribution function by
$C_t (.) \eqdef C_{\Gamma^{(t)}}(.)$.
For $\epsilon>0$, set
\begin{IEEEeqnarray}{rCl}
x_k^{(t)} =\inf\Bigl\{ x: C_t(x) >\frac{\epsilon}{2}+ (k-1)\frac{1-\epsilon}{K_t} \Bigr\}.
\label{eq:xk-1}
\end{IEEEeqnarray}
The numbers $x_1^{(t)}<x_2^{(t)}<\cdots <x_{K_t}^{(t)}$ partition the real line into intervals
such that $\Pr\Bigl(\Gamma^{(t)}\in[x_1^{(t)},x_{K_t}^{(t)}]\Bigr)$ $= 1-\epsilon$ and
$\Pr\Bigl(\Gamma^{(t)}\in[x_k^{(t)},x_{k+1}^{(t)}] \:\Bigl|\: \Gamma^{(t)}\in[x_1^{(t)},x_{K_t}^{(t)}]\Bigr)$ $=\frac{1}{K_t}$.
In practice,  $\Gamma^{(t)}$ has a distribution close to Gaussian, in which case $x_k^{(t)}$ values are given by the explicit formulas in Lemma~\ref{lemm:partition}.

\begin{lemma} \label{lemm:partition}
Let $\Gamma^{(t)}$ have a cumulative distribution  function $C_t(.)$ that is continuous
and strictly monotonic.
For $k\in[K_t]$,
\begin{IEEEeqnarray*}{rCl}
    x^{(t)}_k
    = C_t^{-1}\left(\frac{\epsilon}{2}+ (k-1)\frac{1-\epsilon}{K_t}\right).
\end{IEEEeqnarray*}
If $\Gamma^{(t)}$ has a Gaussian distribution with a mean $\theta^{(t)}$ and standard deviation $\sigma^{(t)}$, then
\begin{IEEEeqnarray}{rCl}
    \label{eq:xk}
    x^{(t)}_k =\theta^{(t)} +\sqrt{2}\sigma^{(t)} \erfc^{-1} \Bigl(2-\epsilon-\frac{2(k-1)(1-\epsilon)}{K_t}\Bigr),
\IEEEeqnarraynumspace
\end{IEEEeqnarray}
where $\erfc$ is the complementary error function.
\end{lemma}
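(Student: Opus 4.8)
The plan is to begin from the definition \eqref{eq:xk-1} and show that continuity together with strict monotonicity collapses the infimum into a single inverse-CDF evaluation, after which the Gaussian formula follows by inverting the normal CDF written through $\erfc$. Write $p_k = \frac{\epsilon}{2} + (k-1)\frac{1-\epsilon}{K_t}$ for the threshold in \eqref{eq:xk-1}. First I would verify that $p_k\in(0,1)$ for every $k\in[K_t]$, so that $C_t^{-1}(p_k)$ is meaningful: since $p_k$ is increasing in $k$, it suffices to note $p_1 = \epsilon/2 > 0$ and $p_{K_t} = 1 - \epsilon/2 - (1-\epsilon)/K_t < 1$. Because $C_t$ is continuous and strictly increasing, it is a bijection onto its range with a continuous strictly increasing inverse, so the condition $C_t(x) > p_k$ is equivalent to $x > C_t^{-1}(p_k)$. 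Hence $\{x : C_t(x) > p_k\} = (C_t^{-1}(p_k), \infty)$, whose infimum is exactly $C_t^{-1}(p_k)$, which establishes the first formula.

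For the Gaussian specialization, I would substitute the standard-normal CDF in the $\erfc$ form $C_t(x) = \frac{1}{2}\erfc\bigl(-\frac{x-\theta^{(t)}}{\sqrt2\,\sigma^{(t)}}\bigr)$, obtained from $\Phi(z) = \frac{1}{2}\erfc(-z/\sqrt2)$ via $\erfc(-z) = 2 - \erfc(z)$. Setting $C_t(x_k^{(t)}) = p_k$ and inverting gives $-\frac{x_k^{(t)} - \theta^{(t)}}{\sqrt2\,\sigma^{(t)}} = \erfc^{-1}(2p_k)$, i.e. $x_k^{(t)} = \theta^{(t)} - \sqrt2\,\sigma^{(t)}\erfc^{-1}(2p_k)$. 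To cast this in the stated form I would apply the reflection identity $\erfc^{-1}(2 - u) = -\erfc^{-1}(u)$, itself a consequence of $\erfc(-y) = 2 - \erfc(y)$, which turns the expression into $x_k^{(t)} = \theta^{(t)} + \sqrt2\,\sigma^{(t)}\erfc^{-1}(2 - 2p_k)$. Substituting $2 - 2p_k = 2 - \epsilon - \frac{2(k-1)(1-\epsilon)}{K_t}$ then recovers \eqref{eq:xk}.

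The argument is essentially bookkeeping, and no substantive obstacle is anticipated. The only place demanding care is the sign convention relating the normal CDF to $\erfc$ and the subsequent application of the reflection identity, since a dropped minus sign would yield the correct magnitude but the wrong tail; I would guard against this by confirming monotonicity directions at the endpoints $k=1$ and $k=K_t$. The sole genuine hypotheses that must be invoked are continuity and strict monotonicity of $C_t$, which guarantee the inverse exists and is single-valued, and the verification that $p_k$ lies strictly inside $(0,1)$ so that both $C_t^{-1}$ and $\erfc^{-1}$ are evaluated within their domains.
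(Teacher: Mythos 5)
Your proof is correct and follows exactly the elementary route the paper intends: the paper's own proof is the single remark that the result follows from ``elementary calculus,'' and your argument---reducing the infimum in \eqref{eq:xk-1} to $C_t^{-1}(p_k)$ via continuity and strict monotonicity, then inverting the Gaussian CDF written through $\erfc$ with the reflection identity $\erfc^{-1}(2-u)=-\erfc^{-1}(u)$---is precisely that calculation, carried out with the right care about the sign convention and the check that $p_k\in(0,1)$.
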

\begin{proof}
The proof is based on elementary calculus.
\end{proof}

\emphsec{Obtaining the distribution of weights}
To apply \eqref{eq:xk-1} or \eqref{eq:xk}, $C_t(.)$ is required.
To this end, a static WBP (with no parameter sharing) is trained offline given a dataset $\{ (\vc{y}^{(i)}, \vc{c}^{(i)})\}_{i}$.
The empirical cumulative distribution of the weights in each iteration is computed as an approximation to $C_t(.)$. 
However, if the BER is low, it can be difficult to obtain a dataset that contains a sufficient number of examples corresponding to 
incorrectly decoded words required to obtain good generalization.

To address this issue, we apply active learning \cite{fu2013survey,Be2020}.
This approach is based on the fact that the training examples near the decision boundary of the optimal classifier determine the classifier
the most. Hence, input examples are sampled from a probability distribution with a support near the decision boundary.

The following approach to active learning is considered. At epoch $\mathord{e}$ in the training of the WBP, random codewords $\vc c$ and
the corresponding outputs $\vc y$ are computed.
The decoder from the epoch $\mathord{e}-1$ is applied to decode $\vc y$ to  $\hat{\vc c}=\text{WBP}(\pmb{\gamma}, L(\vc y))$.
An acquisition function $A_f(\vc c, \hat{\vc c}):\mathcal{C} \times \FF^n \mapsto \mathbb{R}^+$
evaluates whether the example pair $(\vc y, \vc c)$ should
be retained.
A candidate example is retained if $A_f(\vc c, \hat{\vc c})$ is in a given range.

The choice of the acquisition function depends on the specific problem being solved, the architecture of the NN, and
the availability of the labeled data \cite{fu2013survey}.
In the context of training the NN decoders for channel coding, the authors of
\cite{Be2020} use \emph{distance parameters} and \emph{reliability parameters}.
Inspired by \cite{Be2020}, the authors of \cite{H.Noghrei2023} define the
acquisition function using importance sampling.
In this paper, the  acquisition function is the number of errors  $A_f(\vc c, \hat{\vc c}) =\|\vc c -\hat{\vc c}\|_H = d_H (\vc c, \hat{\vc c})$, where $d_H$ is the Hamming distance.

The dataset is incrementally generated and pruned as follows.
At each epoch $e$, a subset $S_e  =  \{ (\vc{y}^{(i)}, \vc{c}^{(i)}) \}_{i=1}^{\mathfrak{b}_1}$ of $\mathfrak{b}_1$ examples, filtered by the acquisition function, is selected.
The entire dataset at epoch $e$ is $S^e = \text{Prune}(\cup_{e'=1}^e S_{e'})$ and has size $\mathfrak{b}_2>\mathfrak{b}_1$.
The operator Prune removes the subsets $S_{e'}$ introduced in old epochs  $e'\in [e-e_0]$ if $e>e_0\eqdef \mathfrak{b}_2/\mathfrak{b}_1$ and otherwise leaves its input intact.
At each epoch $e$, the loss function is averaged over a batch set of size $\mathfrak{b}_s$ obtained by randomly sampling from $S^e$.

\emphsec{Complexity of the parallel decoders}
The computational complexity of the decoder is measured  in real multiplications (RMs).
For instance, the complexity of the WMS with $T$ iterations, $\alpha_v=1$,
without parameter sharing, or with Type $T_b$ parameter sharing, is $\text{RM} = 2T |E|$, where $|E|=nd_v=md_c$ is the number of edges
of the Tanner graph of the code.
For the WMS decoder with Type $T_a$ or Type $T_b V C$ parameter sharing,
$\text{RM} = T\left(m+n\right)$.
The latter arises from the fact that equal weights factor out of the $\sum$ and the $\min$ terms in BP and are applied once.
Thus, the complexity of $\nu$ parallel WMS decoders with Type $T_a$ parameter sharing
is $\text{RM} = \nu T\left(m+n\right)$.
If $\alpha_v\neq 1$, $nT$ is added to the above formulas. Finally, the complexity of Type $T_a V C$ decoder is RM $=n$ per single iteration. These expressions neglect the cost of the syndrome check.

\subsection{Two-Stage Decoder}

In a parallel decoder, the weights are restricted in a discrete set.
The number of parallel decoders depends exponentially on the size of this set.
The two-stage decoder predicts arbitrary non-negative weights, without the exponential complexity of the parallel decoders.
Further, since the weights are arbitrary, the two-stage decoder can improve upon the performance of the parallel decoders, 
when the output LLRs are sensitive to the weights.

\emphsec{Architecture}
Recall that we wish to find a function $g(\vc y)$ that minimizes the BER in \eqref{eq:gamma-opt}.
In a two-stage decoder, this function is expressed by an NN $ \bar{\pmb{\gamma}} = g_{\boldsymbol{\theta}}(\vc y)$ 
parameterized by vector $\boldsymbol{\theta}$. 
Thus, the two-stage decoder is a combination of an NN and a WBP.
First, the NN takes as input  either the LLRs at the channel output $\matdet L(\vc y)$ or $(\matdet L(\vc y), \vc y)$ 
and outputs the vector of weights $\bar{\pmb{\gamma}}$.
Then, the WBP decoder takes the channel LLRs $\matdet L(\vc y)$ and  weights $\bar{\pmb{\gamma}}$ and outputs the decoded word $\bar{\vc y}$.

The parameters $\boldsymbol{\theta}$ are  found using a dataset of examples $\bigl\{(\vc{y}^{(i)}, \pmb{\gamma}^{(i)})\bigr\}_{i}$, 
where $\pmb{\gamma}^{(i)}$ is the target weight.
This dataset can be obtained through a simulation, \ie, transmitting a codeword $\vc{c}^{(i)}$, receiving $\vc{y}^{(i)}$,  
and using, \eg\ an offline parallel decoder to determine the target weight  $\pmb{\gamma}^{(i)}$.
In this manner, $g_{\boldsymbol{\theta}}(\vc y)$ is expressed in a functional form 
instead of being determined by real-time search, which may be more expensive.

In this paper, the NN is a CNN consisting of a cascade of two one-dimensional convolutional layers $\mathbf{Conv}_1$ and $\mathbf{Conv}_2$, followed
by a dense layer $\mathbf{Dens}$.
\(\mathbf{Conv}_i\) applies $F_i$ filters of size $S_i$ and stride 1, and the rectified linear unit (ReLU) activation, $i=1,2$.
The output of \(\mathbf{Conv}_2\) is flattened and passed to \(\mathbf{Dense}\),
which produces the vector of weights $\bar{\pmb{\gamma}}$ of length $T$. This final layer is a linear transformation
with  ReLU activation to produce non-negative weights.

The model is trained by minimizing the quantile loss function
\begin{IEEEeqnarray*}{c}
\ell_{\xi}(\pmb{\gamma}, \bar{\pmb{\gamma}}) = {\rm mean} \Bigl( \max\left(\xi (\pmb{\gamma} - \bar{\pmb{\gamma}}), (\xi - 1)  (\pmb{\gamma} - \bar{\pmb{\gamma}})\right)\Bigr),
\end{IEEEeqnarray*}
where $\xi\in(0,1)$ is the quantile parameter and $\max$ is applied per entry and mean over vector entries.
The choice of loss is obtained by cross-validating the validation error over a number of candidate functions.
This is an asymmetric absolute-like loss, which, if $\xi\geq 1/2$ as in Section \ref{sec:results}, encourages entries of 
$\bar{ \pmb{\gamma}}$ to be close to entries of $\pmb{\gamma}$ from above rather than below.

\emphsec{Complexity of the two-stage decoder}
The computational complexity of the two-stage decoder in the inference mode is the sum of the complexity of the CNN and WMS decoder
\begin{IEEEeqnarray*}{rCl}
\text{RM} = T(m+n) + \text{RM}(\text{CNN}),
\end{IEEEeqnarray*}
where the computational complexity of the CNN is
\begin{IEEEeqnarray}{rCl}
\label{eq:complex_CNN}
\text{RM}(\text{CNN}) &=&  \text{RM}(\mathbf{Conv}_1) + \text{RM}(\mathbf{Conv}_2) + \text{RM}(\mathbf{Dense}) \nonumber
                     \\&=& F_1 (n - S_1 + 1)  S_1 + F_1F_2  (n - S_1 - S_2 + 2) S_2 \nonumber\\
                       &&+ F_2 (n - S_1 -S_2 + 2).
\end{IEEEeqnarray}
The complexity can be significantly reduced by pruning the weights, for example,
by setting to zero the weights below a threshold $\tau_{\text{prun}}$.

\begin{remark}
Neural decoders are sensitive to distribution shifts and often require retraining when the input distribution or channel conditions change. 
To lower the training complexity,  \cite{M.Lian2019} proposed a decoder that learns a mapping from the input SNR to the weights in WBP, 
enabling the decoder to operate across a range of SNRs.
However, the WBP decoder in \cite{M.Lian2019}  is static, since the weights remain fixed throughout the transmission once chosen, 
despite being referred to as dynamic WBP.
We do not  address the problem of distribution shift in this paper.
\end{remark}

\section{Performance and Complexity Comparison}

\label{sec:results}

In this section, we study the performance and complexity trade-off of the static and adaptive decoders
for the AWGN in Section \ref{ChanModl_AWGN_En_De} and optical fiber channel in Section~\ref{sec:fiber-channel-model}.

\subsection{AWGN Channel}
\label{sec:AWGN-results}

\emph{Low-rate regime:}
To investigate the error correction performance of the decoders at low rates,
we consider a BCH code $\mathcal{C}_1(63, 36)$ of rate $0.57$ with the cycle-reduced parity check matrix $\mathrm{H}_{\text{cr}}$ in \cite{HelmMich19}
and two  QC-LDPC codes $\mathcal{C}_2(3224,1612)$ and $\mathcal{C}_3(4016,2761)$, which are, respectively, $(4,8)$- and $(5,16)$-regular
with rates of $0.5$ and $0.69$.
The parity check matrix of each QC-LDPC code is constructed using an exponent matrix obtained
from the random progressive edge growth (PEG) algorithm \cite{Hu2001}, with a girth-search depth of two, which is  subsequently refined manually
to remove the short cycles in their Tanner graphs.
The parameters of the QC-LDPC codes, including the exponent matrices $P_2$ and $P_3$, are given in Appendix \ref{qc-ldpc-prmts}.
In addition, we consider the irregular LDPC codes $\mathcal{C}_4(420,180)$ specified in the 5G New Radio (NR) standard and
$\mathcal{C}_5(128,64)$ in the Consultative Committee for Space Data Systems (CCSDS) standard \cite{HelmMich19}.
The code parameters, such as exponent matrices, are also available in the public repository \cite[v0.1]{repo}.

We compare our adaptive decoders with tanh-based BP, the auto-regressive BP and several static WMS decoder with different levels of parameter sharing,
such as BP with SS-PAN \cite{M.Lian2019}.
The latter is a Type $T_aVC$ WBP with $\alpha_v =\alpha$, \ie, a BP with two parameters.
Additionally, to assess the achievable performance with a large number of parameters in the decoder,
we include a comparison with two model-agnostic neural decoders based on the transformer \cite{Y.Choukroun2024a}
and graph NNs  \cite{cammerer2022graph,clausius2024graph}.

The number of iterations in the WMS decoders of the parallel decoders is chosen so that the total computational complexities of the
parallel decoders  and the static WMS decoder are about the same.
In Figure~\ref{fig:BER_performance}a, this value is $T=5$ for $\mathcal{C}_1$, where $T=T_1+T_2$ is the total number of iterations;
in Figure~\ref{fig:BER_performance}b--d, $T=4$  for $\mathcal{C}_2$ and $\mathcal{C}_3$, $T=6$ for $\mathcal{C}_4$, and $T=10$ for $\mathcal{C}_5$;
in Figure~\ref{fig:BER_performance}f, $T=10$ for $\mathcal{C}_1$.
Furthermore, $K_t=4$ for all $t \in [T]$.

\begin{figure*}[t]
  \centering
    \begin{tabular}{c@{~~}c@{~~}c}
\includegraphics[width=0.32\textwidth]{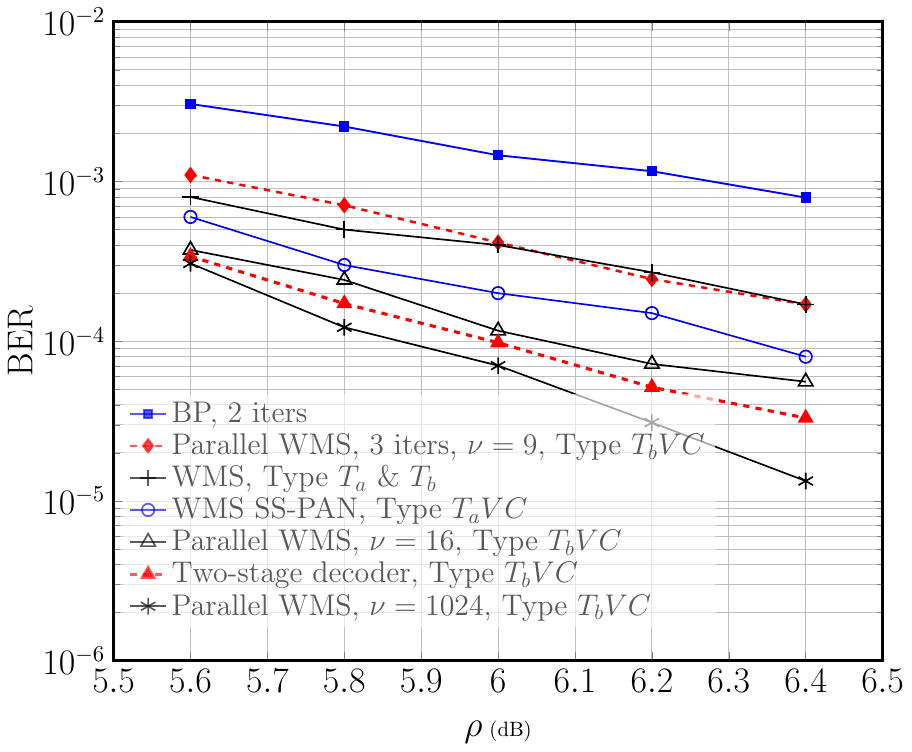} &
\includegraphics[width=0.32\textwidth]{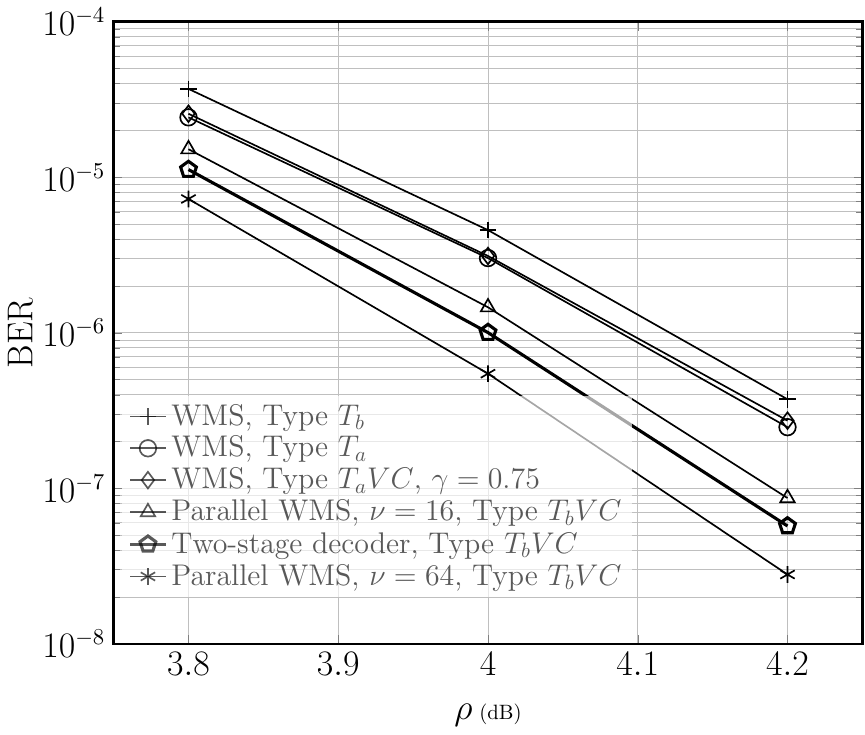} &
\includegraphics[width=0.32\textwidth]{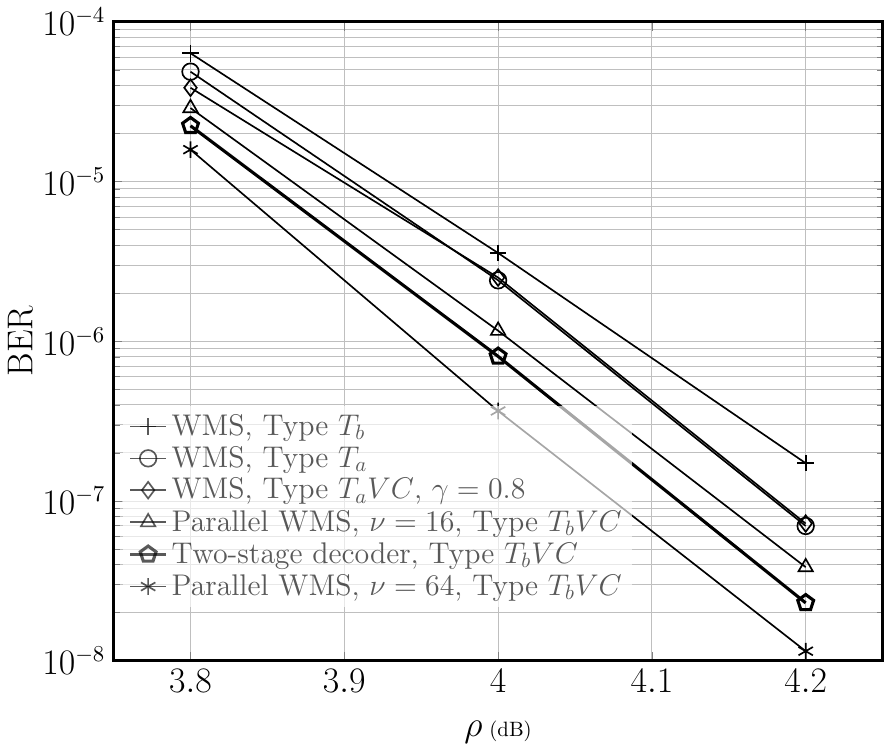} \\
~~(a) & ~~(b) & ~~ (c) \\[2mm]
\includegraphics[width=0.32\textwidth]{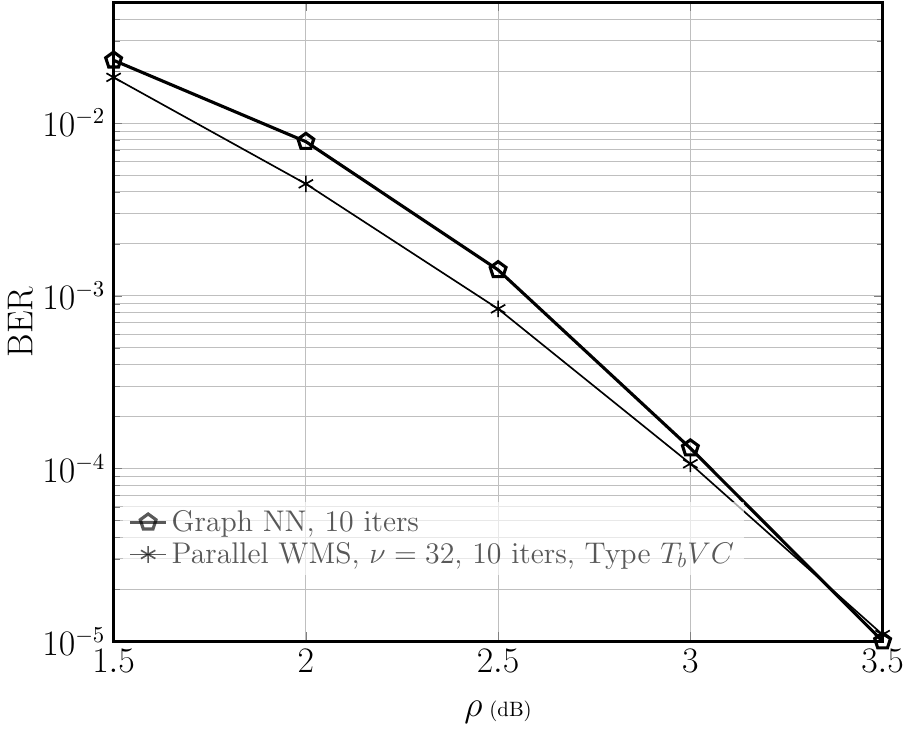} &
\includegraphics[width=0.32\textwidth]{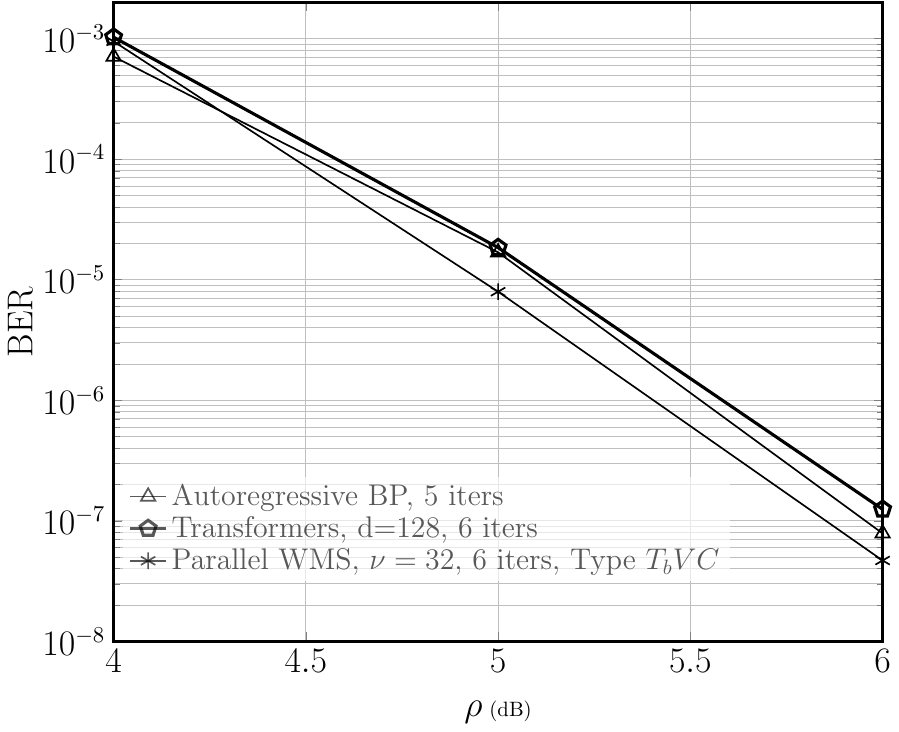} &
\includegraphics[width=0.32\textwidth]{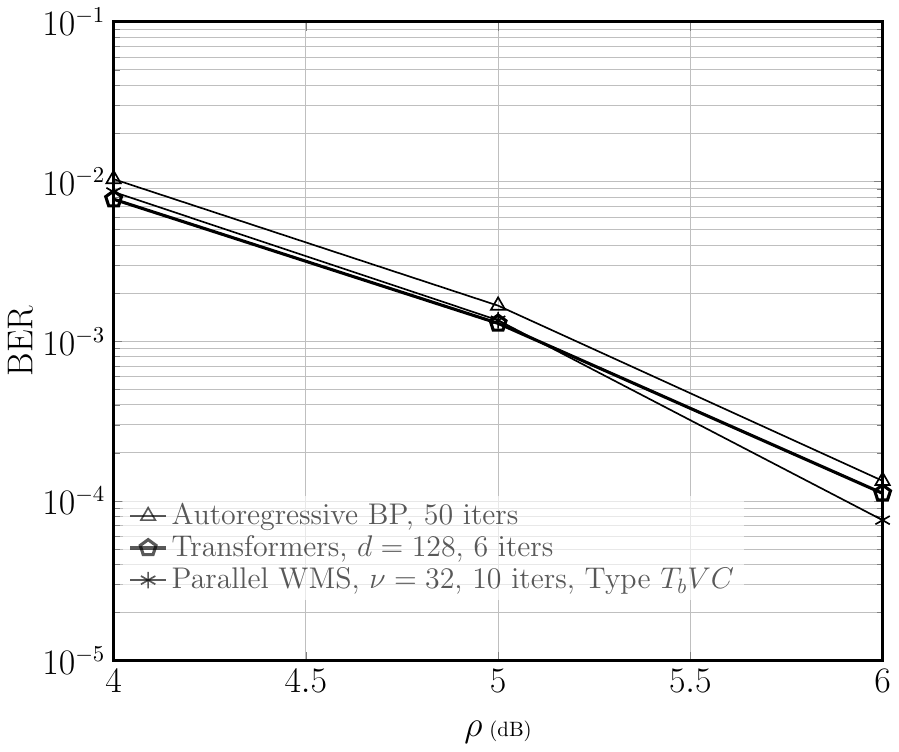} \\
~~~(d) & ~~~(e) & ~~~(f)
  \end{tabular}
  \caption{BER versus SNR $\rho$, for the AWGN channel in the low-rate regime. 
      (a) BCH code $\mathcal C_1(63, 36)$. Here, the curve for WMS, Type $T_a$ \& $T_b$ is from \cite[Fig. 8]{E.Nachmani18} and 
      the curve for WMS SS-PAN is from \cite[Fig. 5(a)]{M.Lian2019}. 
      (b) QC-LDPC code $\mathcal{C}_2(3224, 1612)$,
      (c) QC-LDPC code $\mathcal{C}_3(4016,2761)$, 
      (d) 5G-NR LDPC code $\mathcal{C}_4(420,180)$. Here, the curve for Graph NN is from \cite[Fig. 5]{cammerer2022graph}.
      (e) CCSDS LDPC code $\mathcal{C}_5(128,64)$. In this and the next sub-figure, the Autoregressive BP and Transformers curves are from \cite{nachmani2021autoregressive} and 
      \cite{Y.Choukroun2022}, respectively.
  (\textbf{f}) BCH code $\mathcal C_1(63, 36)$. 
Figures (\textbf{d}--\textbf{f}) show that adaptive decoders achieve the  performance of the static decoders with less complexity.
}
    \label{fig:BER_performance}
\end{figure*}

To compute the value of weights $x_k^{(t)}$, the probability distribution of  $\Gamma^{(t)}$ is required.
For this purpose, a WMS decoder is trained offline.
The training dataset is a collection of examples obtained using
the AWGN channel with a range of SNRs $\rho\in \bigl\{ 5.8, 6.0, 6.2 \bigr\}$ dB
for $\mathcal{C}_1$ or $\rho\in \bigl\{3.8, 3.9, 4.0, 4.1, 4.2 \bigr\}$ dB for $\mathcal{C}_2$ and $\mathcal{C}_3$.
The datasets for  $\mathcal{C}_4$ and $\mathcal{C}_5$ are obtained similarly, with different sets of SNRs.
The acquisition function $A_f(.,.)$ in  active learning
is the Hamming distance.
A candidate example for the training dataset  is retained if  $A_f\leq 10$.
The parameters of the active learning are $b_1=b_s=2000$, and $b_2=40000$.
The loss function is the binary cross-entropy.
The models are trained using the Adam optimizer with a learning rate of $0.0005$.
It is observed that the distribution of $\Gamma^{(t)}$ is nearly Gaussian.
Thus, we obtain $x^{(t)}_k$ from \eqref{eq:xk}.
Table \ref{tab:Gaussian_parameters} presents the mean and variance of this distribution, and $x_k^{(t)}$, for
the three codes considered.
\begin{table*}[t]

    \caption{The mean and variance $(\theta^{(t)}, \sigma^{(t)})$ of $(x_1^{(t)}, \cdots, x_4^{(t)})$ in WMS for the AWGN channel.}
\centering
    \begin{tabularx}{0.64\textwidth}{CCCC}
        \toprule
\textbf{Code}& \boldmath{$\mathcal{C}_1$}
& \boldmath{$\mathcal{C}_2$} &  \boldmath{$\mathcal{C}_3$} \\ 
\midrule
 $t=1$ & \makecell{$\left(0.99,0.019\right)$\\ $\left(0.96,0.98,0.99,1.02\right)$} & \makecell{$\left(0.90,0.026\right)$\\ $\left(0.86,0.89,0.90,0.94\right)$} & \makecell{$\left(0.91,0.023\right)$\\ $\left(0.87,0.90,0.92,0.95\right)$}
 \\ 
\midrule
 $t=2$ & \makecell{$\left(0.97,0.036\right)$ \\ $\left(0.91,0.96,0.98,1.02\right)$} & \makecell{$\left(0.84,0.029\right)$ \\ $\left(0.79,0.83,0.85,0.89\right)$} & \makecell{$\left(0.86,0.030\right)$ \\ $\left(0.81,0.85,0.87,0.90\right)$}
 \\ 
\midrule
 $t=3$ & \makecell{$\left(0.91,0.049\right)$ \\ $\left(0.83,0.89,0.92,0.99\right)$} & \makecell{$\left(0.73,0.032\right)$ \\ $\left(0.68,0.72,0.74,0.78\right)$} & \makecell{$\left(0.75,0.031\right)$ \\ $\left(0.69,0.74,0.76,0.80\right)$}
 \\ 
\midrule
 $t=4$ & \makecell{$\left(0.70,0.086\right)$ \\ $\left(0.56,0.67,0.73,0.84\right)$} & \makecell{$\left(0.63,0.036\right)$ \\ $\left(0.57,0.62,0.64,0.69\right)$}& \makecell{$\left(0.63,0.034\right)$ \\ $\left(0.57,0.61,0.64,0.68\right)$}
 \\ 
\midrule
 $t=5$ & \makecell{$\left(0.40,0.175\right)$ \\ $\left(0.12,0.34,0.46,0.68\right)$} & -- & --
 \\ 
\bottomrule
 \end{tabularx}

    \label{tab:Gaussian_parameters}
\end{table*}

For the two-stage decoder, we use a CNN with  $F_1=5$, $S_1=3$, $F_2=8$, and $S_2=2$, determined by cross-validation. 
The CNN is trained with a dataset of size $80000$,
batch set size $300$, and the quantile loss function with $\xi = 0.75$.
The number of iterations of the WMS decoder for each code is the same as above.

Figure~\ref{fig:BER_performance} illustrates the BER vs. SNR for different codes, and different decoders for the same code.
In each of Figure~\ref{fig:BER_performance}a--d, one can compare different decoders at about the same complexity (except for the parallel 
decoder with the largest $\nu$ that shows the smallest achievable BER).
For instance, it can be seen in Figure~\ref{fig:BER_performance}a that
the two-stage decoder achieves half the BER of the WMS with SS-PAN decoder at SNR $6.4$ dB for the
short length code $\mathcal C_1$,
or approximately $0.32$ dB gain in SNR at a  BER of $10^{-4}$.
For this code, the parallel WMS decoders with 3 iterations and $\nu = 9$ outperforms the tanh-based BP with
nearly the same complexity.
Figure~\ref{fig:BER_performance}b,c show that the
two-stage decoder  offers about an order-of-magnitude improvement in the BER compared to the Type $T_a$ WMS decoder  at $4.2$ dB for moderate-length
codes  $\mathcal C_2$ and $\mathcal C_3$, or over $0.1$ dB gain at a BER of $10^{-6}$.
The performance gains vary with the code, parameters, and  SNR.

Figure~\ref{fig:BER_performance}d--f compare decoders with different complexities at about the same performance.
The proposed adaptive model-based decoders achieve the same performance of the model-agnostic static decoders,
with far fewer parameters.

The computational complexity of the decoders are presented in Table~\ref{tab:ComputComplex}.
For the CNN,  from \eqref{eq:complex_CNN}, $\text{RM} = n(8T + 95) - 24T - 310$, which is further reduced by a factor of  4
upon pruning at the threshold $\tau_{\text{prun}} = 0.001$, with minimal impact on BER.
Thus, the two-stage decoder requires less than half of the RM of the WMS decoder with no or Type $T_a$ parameter sharing.
Moreover, the two-stage decoder requires approximately one-fifth of the RM of the parallel decoders with $\nu = 16$.
Compared to the WMS decoder with SS-PAN \cite{M.Lian2019},
the two-stage decoder has nearly double the complexity, albeit with much lower BER, as seen in Figure~\ref{fig:BER_performance}a.

\begin{table*}[t]

    \caption{Computational complexity of decoders, for the AWGN channel.}
\centering
    \begin{tabularx}{0.7\textwidth}{cCCccc}
        \toprule
        \multirow{2.5}{*}{} & \multirow{2.5}{*}{\boldmath{$\gamma^{*(t)}$}} & \multirow{2.5}{*}{\boldmath{$\alpha^{*(t)}$}} & \multicolumn{3}{@{}c@{}}{\textbf{Average RM per Iteration}} \\
        \cmidrule{4-6}
         &  &  & \makecell{\boldmath{$\mathcal{C}_1$}} & \makecell{\boldmath{$\mathcal{C}_2$}} & \makecell{\boldmath{$\mathcal{C}_3$}} \\
        \midrule
         \multicolumn{6}{@{}c@{}}{\text{No weight sharing}} \\
\midrule
          WMS \cite{E.Nachmani18} & $\gamma_{v,c}^{(t)}$ & $1$ & $768$ & 25,792
& 40,160\\
\midrule
        \multicolumn{6}{@{}c@{}}{\text{Weight sharing}} \\
\midrule
         WMS, Type $T_a$ & $\gamma_{v,c}$ & $1$ & $768$ & 25,792 & 40,160\\
\midrule
        \makecell{WMS, Type $T_a V C$} & $\gamma$ & $1$ & $63$ & $3226$ & $4016$\\
\midrule
        \makecell{Parallel WMS\\ Type $T_b V C$, $\nu=16$}& $\gamma^{(t)}$ & $1$ & $1440$ & 77,376 & 84,336 \\
\midrule
        \makecell{Parallel WMS  \\ Type $T_b V C$, $\nu=64$}& $\gamma^{(t)}$ & $1$ & -- & $\simeq 3.09\times {10}^5$ & $\simeq$$3.37\times {10}^5$\\
\midrule
        \makecell{Parallel WMS \\ Type $T_b V C$, $\nu=1024$}& $\gamma^{(t)}$ & $1$ & 92,340 & -- & -- \\
\midrule
        \makecell{Two-stage decoder \\Type $T_b V C$, $\tau_{\text{prun}}=0.001$} & $\gamma^{(t)}$ & $1$ & $\simeq$300 & $\simeq$14,093 & $\simeq$17,558 \\
\midrule
        \makecell{WMS SS$-$PAN,\\Type $T_a V C$ \cite{M.Lian2019}} & $\gamma^{(t)}$ & $1$ & $153$ & $8060$ & $9287$ \\
        \bottomrule
    \end{tabularx}
    \label{tab:ComputComplex}
\end{table*}

\emph{High-rate regime:}
To further investigate the error correction performance of the decoders at high rates,
we consider three single-edge QC-LDPC codes, $\mathcal{C}_{6}(1050, 875)$, $\mathcal{C}_7(1050, 850)$,
and $\mathcal{C}_8(4260, 3834)$,  associated, respectively, with
the PCMs $\matdet{H}_6(\lambda=7, \omega=42, M=25, \matdet{P}_6)$, $\matdet{H}_7(8, 42, 25, \matdet{P}_7)$,
and $\matdet{H}_8(6, 60, 71, \matdet P_8)$. These codes have rates $r=0.84$, $0.81$, and $0.9$, respectively,
and are constructed using the PEG algorithm.
The PEG algorithm requires the degree distributions of the Tanner graph, which are optimized using the stochastic
 extrinsic information transfer (EXIT) chart described in Appendix~\ref{qc-ldpc-prmts}.
Additionally, we include the polar code $\mathcal{C}_{9}(1024, 854)$ with $r=0.84$ from the 5G-NR standard as a state-of-the-art benchmark.
The code parameters, including degree distribution polynomials and the exponent matrices,
are given in Appendix \ref{qc-ldpc-prmts}.

The acquisition function with   active learning in the parallel decoders is based on
Figure~\ref{fig:threshold_C3}. The figure  shows the scatter plot of the
pre-FEC error $\mathfrak e_1=\norm{\vc c-\bar{\vc y}}$ versus post-FEC error $\mathfrak e_2=\norm{\vc c -\hat{\vc c}}_H$, for $340$ examples $(\vc c, \vc y)$
for $\mathcal{C}_6$ at $E_b/N_0=4.25$ dB.
Here, $\bar{\vc y}$ is the hard decision of the LLRs at the channel output defined in \eqref{SyndromCheckHard_1},
and $\hat{\vc c}$ is decoded with the best decoder at epoch $e$, \ie, the WMS with weights from epoch $e-1$.
The acquisition function retains $(\vc c, \vc y)$ if $\mathfrak e_1=0$ (no error) or if $(\mathfrak e_1, \mathfrak e_2)$
falls in the rectangle in Figure~\ref{fig:threshold_C3} (with error).
The rectangle is defined such that $\prob( (\mathfrak e_1,\mathfrak e_2)\in \mathcal S)\geq 0.95$.
It is ensured that 70\% of examples satisfy $\mathfrak e_1=0$ and 30\% with $(\mathfrak e_1,\mathfrak e_2)$ in the rectangle $\mathcal S$.
In this example, $\mathfrak e_1\in\{ 80, 81, \cdots, 100\}$, and $\mathfrak e_2 \in [\mu - 2\sigma,\mu + 2\sigma]$, $\mu=149.97$, $\sigma = 12.2$.
We use $\mathfrak{b}_1=2000$, $\mathfrak{b}_2=20000$, $\mathfrak{b}_s=500$ and the learning rate $0.001$.

\begin{figure}[t]

\includegraphics[width=0.475\textwidth]{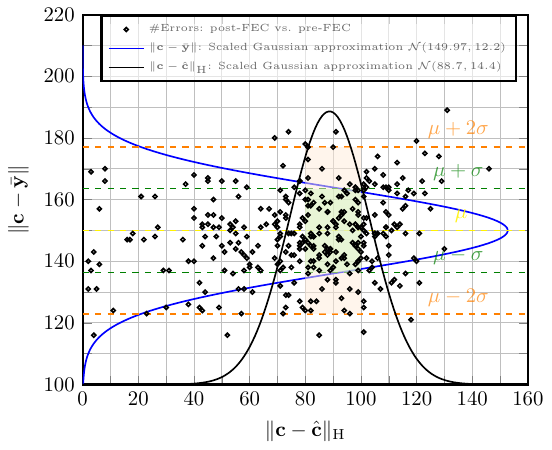}
\caption{The scatter plot of $(\mathfrak e_1, \mathfrak e_2)$ for $\mathcal{C}_{9}$ at  $E_b/N_0=4.25$ dB, for the AWGN channel.
The scaled  Gaussian approximation curve is fitted per axis. }
\label{fig:threshold_C3}
\end{figure}

For the adaptive decoder, we consider five parallel decoders with $T_1=4$ iterations.
The decoder for the binary codes $\mathcal{C}_{6}$, $\mathcal{C}_7$ and $\mathcal{C}_8$ is WMS with Type $T_a V C$ sharing.
The output of the decoder with the smallest syndrome is continued with an MS decoder with $T_2=4$ iterations.
The polar code, however, is decoded with either a  cyclic redundancy check (CRC)
and successive cancellation list (SCL) with list size $L$ \cite{Tal2015}  
or the optimized successive cancellation (OSC) \cite{sural2020tb}.

Figure~\ref{fig:perform_polar_vs_qcl} shows the performance of the adaptive and static MS decoders for $\mathcal{C}_{6}$,
$\mathcal{C}_7$ and $\mathcal{C}_{9}$.
The polar code $\mathcal{C}_{9}$ with 24 CRC bits is simulated using AFF3CT software 
 toolbox \cite[v3.0.2]{AFF3CT}.
It can be seen that at high SNRs, $E_b/N_0 \geq 4.6$, $\mathcal{C}_6$ and $\mathcal{C}_7$ decoded with adaptive parallel decoders outperform $\mathcal{C}_{9}$.
Given this, and the higher complexity of decoding the polar code with either SCL or OSC \cite{Tal2015}, 
the choice of QC-LDPC codes for the inner code for the optical fiber channel in Section \ref{sec:fiber-results} is justified.

\begin{figure}[t]

   \includegraphics[width=0.475\textwidth]{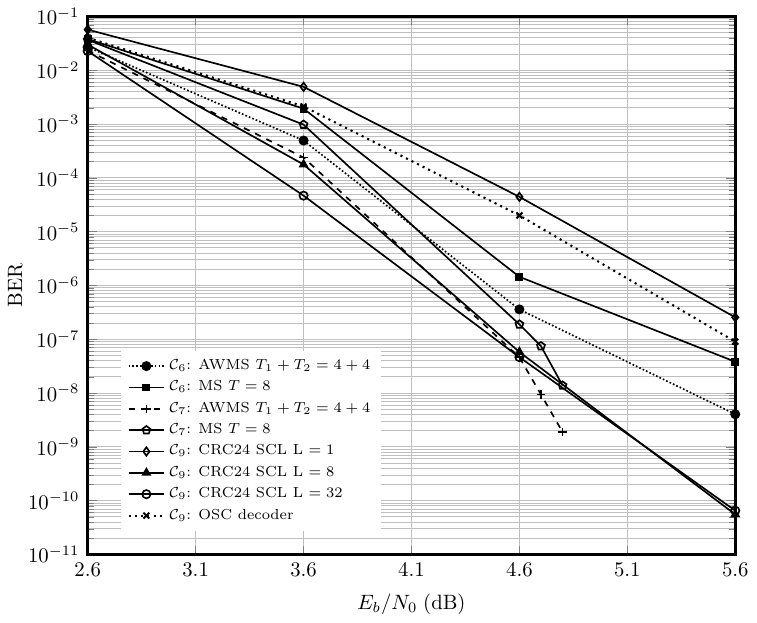}
    \caption{Performance of the polar code $\mathcal{C}_{9}\left(1024,854\right)$ versus
    QC-LDPC codes $\mathcal{C}_{6}\left(1050,875\right)$ and $\mathcal{C}_7\left(1050,850\right)$, for the AWGN channel in the high-rate regime.
    The curve for OSC decoder is from  \cite{sural2020tb}. 
}
    \label{fig:perform_polar_vs_qcl}
\end{figure}

Figure~\ref{fig:perform_qc3} shows the performance of $\mathcal{C}_8(4260,3834)$ with rate $0.9$.
The adaptive WMS decoder with $T_1+T_2=8$ iterations outperforms the static MS decoder with $T=8$ iterations at $E_b/N_0=5$
by an order of magnitude in BER.

\begin{figure}[t]

   \includegraphics[width=0.475\textwidth]{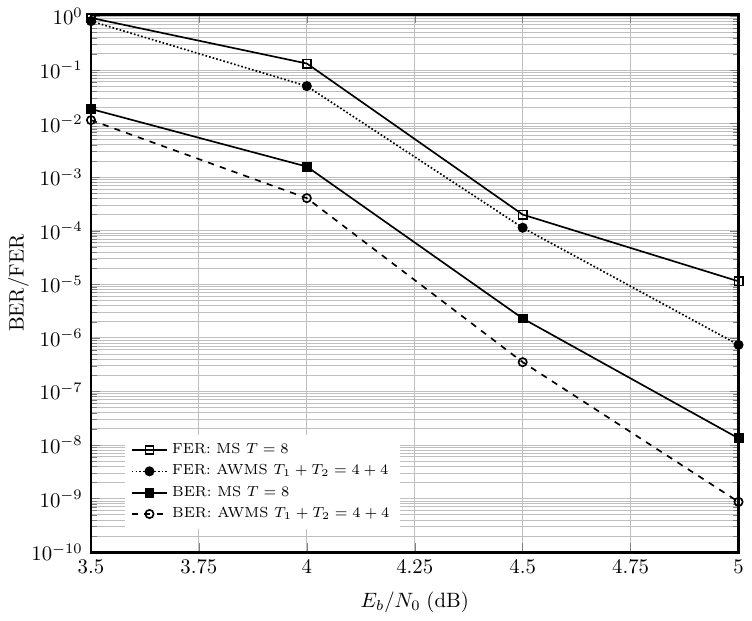}
    \caption{Performance of the static and adaptive  MS decoder for  $\mathcal{C}_8\left(4260,3834\right)$ at $E_b/N_0=4$ dB, for the AWGN channel in the high-rate regime.}
    \label{fig:perform_qc3}
\end{figure}

The gains of WBP depend on parameters such as the block length or SNR \cite{tang2021} (Sec.~IV.~d~\cite{S.Adiga2024}). 
In general, the gain is decreased when the block length is increased, with other parameters remaining fixed.

\subsection{Optical Fiber Channel}
\label{sec:fiber-results}

We simulate a 16-QAM WDM optical communication system described in \mbox{Section \ref{sec:fiber-channel-model}}, with parameters described in Table \ref{tab:SystemPara}.
The continuous-time model \eqref{eq:nls} is simulated with the split-step Fourier method with a spatial step size of  100~m 
and a simulation bandwidth of 200 GHz.
DBP with two samples/symbol is applied to compensate for the physical effects  and to  obtain
the per-symbol channel law $\Pr(y | s)$, $s\in\mathcal A$, $y\in\mathbb{C}$.
For the inner code in the concatenated code, we consider two QC-LDPC codes of rate $r_i=0.92$: binary single-edge code $\mathcal{C}_{10}(4000,3680)$ and non-binary multi-edge code $\mathcal C_{11}(800, 32)$
over $\mathbb{F}_{32}$, respectively, with PCMs $\matdet{H}_{10}(4, 50, 80, \matdet{P}_{10})$ and $\matdet{H}_{11}(2, 25,$ $32,$ $\matdet{P}_{11})$,
given in Appendix~\ref{qc-ldpc-prmts}.
For the component code used in the outer spatially coupled code,
we consider multi-edge QC-LDPC code $\mathcal{C}_{12}(3680,3520)$ with the PCM $\matdet{H}_{12}(1, 23, 160, \matdet{P}_{12})$.
For $m_s = 2$ and $L = 100$, the resulting SC-QC-LDPC code has the PCM $\matdet{H}_{\text{SC}}(1, 23, 160,\matdet P_{12}, 2,100,\Bar{\matdet{B}}_{12})$,
where $\bar{\matdet B}_{12}$ is the spreading matrix.
The outer SC-QC-LDPC code is encoded with the sequential encoder~\cite{Tazoe2012}.
This requires that the top-left
$\lambda M\times \omega M$ block $\matdet{H}_0(0)$ of $\matdet{H}_{\text{SC}}$
in \mbox{Equation (\ref{eq:Htl})}
is of full rank. Thus, $\bar{\matdet B}_{12}$ is designed to fulfill this condition.
In Equation (\ref{eq:Htl}), we have $\matdet{H}_t(\lambda=1,\omega=23, M=160)\eqdef \matdet H\in\FF^{160\times 3680}$, $t=0,1,2$, $\matdet{H}_0$ is of full rank,
and $\matdet{H}_{\text{SC}}\in\FF^{16320\times 368000}$.
The rate of the component code is  $r_{\text{QC}}=1-1/23\approx 0.956$, and the rate of outer SC code
is $r_{\text{o}}$ $= r_{\text{QC}} - \frac{m_{s}\lambda}{L\omega}$ $\approx 0.955$, so $r_{\text{total}}=r_{\text{i}}\cdot r_{\text{o}}\approx 0.88$.
The $\matdet P_{12}$ and $\Bar{\matdet B}_{12}$
matrices are constructed heuristically and are given in  Appendix~\ref{qc-ldpc-prmts}.

The inner code is decoded with the parallel decoder, with five  decoders  with four ~iterations each.
The decoder for the binary code $\mathcal{C}_{9}$ is WMS with Type $T_a V C$ parameter sharing, while for the non-binary code, $\mathcal{C}_{11}$
is WEMS with Type $T_a V C Q$ sharing and $\beta_{c,v, q'}^{(t)}=1$.
The static EMS algorithm \cite{Takasu2024} is parameterized as in Section~\ref{sec:wbp}, initialized with the LLRs  computed from Eq. 1~\cite{Boutillon2013}.
The outer code is decoded with the SWD, with the static MS decoding of a   maximum of $26$ iterations per window.

Tables \ref{tab:fiber-binary-code} and \ref{tab:fiber-nonbinary} contain a summary of the numerical results. $\text{BER}_i$ is pre-FEC BER,
and the reference BER for the coding gain is $\ber_o = {10}^{-12}$.
At $\mathcal{P}=-10$ dBm, the total gap to $\text{NCG}_f$ for the adaptive weighted min-sum AWSM (resp., WEMS) decoder is $2.51$ (resp., $1.75$),
while this value is  $3.31$ (resp., $2.29$) and $3.44$ (resp., $2.69$), respectively, for the NNMS (resp., NNEMS) and MS (resp., EMS) decoders.
Thus, the adaptive WBP provides a coding gain of 0.8 dB compared to the static NNMS decoder with about the same computational complexity and decoding latency.

\begin{table*}[t]
    \caption{Concatenated inner binary QC-LDPC code $\mathcal{C}_{10}$ and outer SC-QC-LDPC code $\mathcal{C}_{12}$ 
    with an $r_{\text{total}}=0.88$ for the optical fiber channel. The sections for $\text{BER}_{\text{i}}$ 0.012 and 0.025 correspond 
    to  average powers -10 and -11 dBm, respectively. NCGs are in dB.}
\centering
    %\begin{tabularx}{\textwidth}{ccCcccccccc}
        \begin{tabularx}{0.75\textwidth}{
  c@{\hskip7pt}
  c@{\hskip7pt}
  C@{\hskip7pt}
  c@{\hskip7pt}
  c@{\hskip7pt}
  c@{\hskip7pt}
  c@{\hskip7pt}
  c@{\hskip7pt}
  c@{\hskip7pt}
  c@{\hskip7pt}
  c
}
\toprule
         & $\textbf{BER}_{\textbf{i}}$ &
    \makecell{\textbf{Inner-SD} \\ \textbf{Decoder}} &
    \makecell{$\textbf{BER}_{\textbf{o}}$ \\ \textbf{Inner}} &
    \makecell{$\textbf{BER}_{\textbf{o}}$ \\ \textbf{Total}} &
    \makecell{\textbf{NCG} \\ \textbf{Inner}} &
    \makecell{\textbf{NCG} \\ \textbf{Total}} &
    \makecell{$\textbf{NCG}_f $ \\ \textbf{Inner}} &
    \makecell{$\textbf{NCG}_f $ \\ \textbf{Total}} &
    \makecell{\textbf{Gap to} $\textbf{NCG}_f$ \\ \textbf{Inner}} &
    \makecell{\textbf{Gap to } $\textbf{NCG}_f$ \\ \textbf{Total}} \\
        \midrule
           & \multirow{5}{*}{$0.012$} &  AWMS & $3.29\times 10^{-6}$
 & $4.52\times 10^{-8}$ & $5.64$ & $6.93$  & $9.38$ &  $9.44$ & $3.74$ &$2.51$\\
           \cmidrule{3-11}
           & & \makecell{\text{NNMS}\\ $\theta = 0.75$}& $4.02\times 10^{-6}$ & $5.43\times 10^{-7}$~ & $5.56$  & $6.13$ & $9.38$ & $9.44$ & $3.82$ &$3.31$\\
            \cmidrule{3-11}
           & &  MS & $4.77 \times 10^{-6}$ & $7.75\times 10^{-7}$ & $5.49$ & $6.00$ &  $9.38$ &  $9.44$ & $3.89$ & $3.44$\\
        \midrule
           & \multirow{5}{*}{$0.025$} &  AWMS &  $0.019$ & $0.017$  &  $0.13$  & $0.12$  & $10.20$ &  $10.28$ & $10.07$  &$10.16$\\
           \cmidrule{3-11}
           & & \makecell{\text{NNMS}\\ $\theta = 0.72$}& $0.02$ & $0.018$ & $0.04$ & $0.03$ & $10.20$ & $10.28$ & $10.16$  &$10.25$\\
            \cmidrule{3-11}
           & & MS & $0.023$ & $0.02$ & $-0.20$ & $-0.15$ & $10.20$ & $10.28$ & $10.40$  &$10.43$\\
        \bottomrule
    \end{tabularx}
    \label{tab:fiber-binary-code}
\end{table*}

\begin{table*}[t]

    \caption{Concatenated inner non-binary QC-LDPC code $\mathcal{C}_{11}$ and outer SC-QC-LDPC code $\mathcal{C}_{12}$ with $r_{\text{total}}=0.88$ for the optical fiber channel.
The sections for $\text{BER}_{\text{i}}$ 0.012 and 0.025 correspond to  average powers -10 and -11 dBm, respectively. NCGs are in dB.}
%    \begin{tabularx}{\textwidth}{ccCcccccccc}
\centering
\begin{tabularx}{0.75\textwidth}{
  c@{\hskip7pt}
  c@{\hskip7pt}
  C@{\hskip7pt}
  c@{\hskip7pt}
  c@{\hskip7pt}
  c@{\hskip7pt}
  c@{\hskip7pt}
  c@{\hskip7pt}
  c@{\hskip7pt}
  c@{\hskip7pt}
  c
}

\toprule
         & $\textbf{BER}_{\textbf{i}}$ &
    \makecell{\textbf{Inner-SD} \\ \textbf{Decoder}} &
    \makecell{$\textbf{BER}_{\textbf{o}}$ \\ \textbf{Inner}} &
    \makecell{$\textbf{BER}_{\textbf{o}}$ \\ \textbf{Total}} &
    \makecell{\textbf{NCG} \\ \textbf{Inner}} &
    \makecell{\textbf{NCG} \\ \textbf{Total}} &
    \makecell{$\textbf{NCG}_f $ \\ \textbf{Inner}} &
    \makecell{$\textbf{NCG}_f $ \\ \textbf{Total}} &
    \makecell{$\textbf{Gap to } \textbf{NCG}_f$ \\ \textbf{Inner}} &
    \makecell{$\textbf{Gap to } \textbf{NCG}_f$ \\ \textbf{Total}} \\
\midrule
         & \multirow{5}{*}{$0.012$} &  AWEMS & ~$3.21\times 10^{-8}$~ &~$2.74\times 10^{-9}$~ & $7.23$ & $7.69$ & $9.38$ &  $9.44$ & $2.15$ & $1.75$ \\
           \cmidrule{3-11}
           & & \makecell{ \text{NNEMS}\\ $\theta = 0.2$}& ~$4.61\times 10^{-8}$~ & ~$2.11\times 10^{-8}$~ & $7.12$ & $7.15$ & $9.38$ & $9.44$ & $2.26$ & $2.29$ \\
            \cmidrule{3-11}
           & &  EMS & $2.44\times 10^{-7}$ & $8.20\times 10^{-8}$ & $6.60$ & $6.75$ &  $9.38$ &  ²$9.44$ & $2.78$ & $2.69$\\
\midrule
           & \multirow{5}{*}{$0.025$} &  AWEMS & $0.0063$ & $0.0051$ & $1.73$ & $1.80$ & $10.20$ &  $10.28$ & $8.47$ & $8.48$\\
           \cmidrule{3-11}
           & & \makecell{\text{NNEMS}\\ $\theta = 0.25$}& $0.0087$ &  $0.0075$ & $1.32$ & $1.32$ & $10.20$ & $10.28$ & $8.88$ & $8.96$ \\
            \cmidrule{3-11}
           & & EMS & $0.025$ & $0.022$ & $-0.36$ & $-0.32$ & $10.20$ & $10.28$ & $10.56$ & $10.60$\\
\bottomrule
    \end{tabularx}
    \label{tab:fiber-nonbinary}
\end{table*}

\section{Conclusions}
\label{sec:conclusions}

Adaptive decoders are proposed for codes on graphs that can be decoded with message-passing algorithms. 
Two variants, the parallel WBP and the two-stage decoder, are studied.
The parallel decoders search for the best sequence of weights in real time using multiple instances of the WBP decoder running concurrently, while the
two-stage neural decoder employs an NN to dynamically determine the weights of WBP for each received word.
The performance and complexity of the adaptive and several static decoders  are compared for a number of codes over an AWGN and optical fiber channel.
The simulations show that significant improvements in BER can be obtained using adaptive decoders, depending on the channel, SNR, the code and  its parameters.
Future work could explore further reducing the computational complexity of the online learning, 
and applying adaptive decoders to other types of codes and wireless channels.

\appendices 

\section{Protograph-Based QC-LDPC Codes}
\label{subsec:qc-ldpc}

In this appendix and the next, we provide the supplementary information necessary to reproduce the results presented in this paper.
The presentation in Appendix \ref{sec:sc-code}  may be of independent interest, as it provides an accessible exposition of the construction and
decoding of the SC codes.

\subsection{Construction for the Single-Edge Case}
\label{subsubsec:qc-ldpc}

A single-edge PB QC-LDPC code $\mathcal{C}$  is constructed in two steps.
First, a base matrix  $\matdet{B} \in\mathbb{F}_2^{\lambda\times\omega}$ is constructed, where $\lambda,  \omega\in\mathbb{N}$, $\lambda \leq \omega$.
Then, $\matdet{B}$ is expanded to the PCM of $\mathcal C$ by replacing each zero in $\matdet{B}$ with the all-zero matrix $\textbf{0}\in\mathbb{F}_2^{M\times M}$, where $M\in\mathbb N$ is the lifting factor, and  a one in row $i$ and column $j$
with a sparse circulant matrix $\matdet{H}_{i,j} \in\mathbb{F}_2^{M\times M}$.

Let $\matdet{P}\in [-1{:}M-1]^{\lambda\times\omega}$ be the exponent matrix of the code,
with the entries denoted by $p_{i,j}$.
The matrices $\matdet{H}_{i,j}$ (and $\matdet{B}$) can be obtained from $\matdet P$ as follows.
Denote by $\matdet{I}^{n} \in\mathbb{F}_2^{M\times M}$, $n\in [-1{:}M-1]$ the circulant permutation matrix obtained by cyclic-shifting of each row of the $M\times M$ identity matrix $n$ positions to the right, with the convention that $\matdet{I}^{-1}$ is the all-zero matrix.
Then, $\matdet{H}_{i,j}=\matdet{I}^{p_{i,j}}$.
The PCM
$\matdet{H}:= \matdet{H}(\lambda, \omega, M, \matdet{P})
\in\mathbb{F}_2^{\omega M\times\lambda M }$ of this QC-LDPC code is
\begin{IEEEeqnarray}{rCl}
\label{eq:H-block}
\matdet{H}=
\begin{pmatrix}
 \matdet{I}^{p_{1,1}}  &  \matdet{I}^{p_{1,2}}  & \cdots   & \matdet{I}^{p_{1,\omega}} \\
  \matdet{I}^{p_{2,1}} &   \matdet{I}^{p_{2,2}} & \cdots   &  \matdet{I}^{p_{2,\omega}} \\
 \vdots   & \vdots    & \ddots   & \vdots \\
  \matdet{I}^{p_{ \lambda,1}}  &   \matdet{I}^{p_{\lambda,2}} & \cdots   &  \matdet{I}^{p_{\lambda,\omega}}\\
        \end{pmatrix}.
\end{IEEEeqnarray}
This code has a length $n = \omega M$, $k=\omega M-r$, and a rate $r  = 1-r/(\omega M)$, where  $r\leq \lambda M$ is the rank of $\matdet H$. If $\matdet H$ is of a full rank, $r=1-\lambda/\omega$.
The base matrix is also obtained from $\matdet P$, as $\matdet B_{i,j}=0$ if $p_{i,j}=-1$ and $\matdet B_{i,j}=1$ if $p_{i,j} \neq -1$.

Denote the Tanner graph of $\mathcal{C}$ by  $T_{\mathcal{C}}$, and let $d_{c}$ and $d_{v}$ be, respectively, the degree of the check node $c$ and the variable node $v$ in $T_{\mathcal{C}}$.
If $\matdet B$  is regular (i.e., its rows have the same Hamming weight), then $\matdet H$ is regular,
and the check and variable nodes of $\mathcal C$ have the same degrees $d_c$ and $d_v$, respectively.
In this case, $\mathcal{C}$ is said to be $\left(d_c,d_v\right)$-regular.
More generally,
a variable node's degree distribution polynomial can be defined as $\Upsilon(x)=\sum_{d} \Upsilon_d x^{d-1}$,
where $\Upsilon_d$ is the fraction of variable nodes of degree $d$,
and a check node degree distribution $\Lambda(x)=\sum_{d} \Lambda_d x^{d-1}$, where $\Lambda_d$ is the
fraction of check nodes of degree  $d$.

The parameter matrices $\matdet B$ and $\matdet P$ can be obtained so as to maximize the girth of $T_{\mathcal{C}}$
using search-based methods such as the PEG algorithm \cite{Hu2001,LiZo2004},
algebraic methods \cite[Sec. 10]{ryan2009}, or a combination of them \cite{tasdighi2021integer}.

\begin{example}\label{EX_1}
Consider $\lambda=3$, $\omega=5$, and the base matrix $\matdet B\in\mathbb{F}_2^{3\times 5}$
   \begin{IEEEeqnarray*}{rCl}
   \matdet B &=& \begin{pmatrix}
         0& 1& 1& 1& 0  \\
         1& 1& 0& 1& 1 \\
         1& 0& 1& 1& 1
         \end{pmatrix}.
\end{IEEEeqnarray*}
For any exponent matrix $\matdet P$,  $\matdet H\in\mathbb{F}_2^{3M\times 5M}$ is
   \begin{IEEEeqnarray*}{rCl}
      \matdet H &=& \begin{pmatrix}
         \matdet{I}^{-1}& \matdet{I}^{p_{1,2}}& \matdet{I}^{p_{1,3}}& \matdet{I}^{p_{1,4}}& \matdet{I}^{-1}  \\
         \matdet{I}^{p_{2,1}}& \matdet{I}^{p_{2,2}}& \matdet{I}^{-1}& \matdet{I}^{p_{2,4}}& \matdet{I}^{p_{2,5}} \\
         \matdet{I}^{p_{3,1}}& \matdet{I}^{-1}& \matdet{I}^{p_{3,3}}& \matdet{I}^{p_{3,4}}& \matdet{I}^{p_{3,5}}
    \end{pmatrix}.
    \qed
\end{IEEEeqnarray*}
\end{example}

\subsection{Construction for the Multi-Edge Case}
The above construction can be extended to multi-edge PB QC-LDPC codes.
Here, $\matdet B_{i,j}$, instead of a binary number, is a sequence of length $D_{i,j}$ with entries in $\FF$.
Likewise, $p_{i,j}$ is a sequence of length $D_{i,j}$ with entries $p_{i,j}^{d}\in [-1{:}M-1]$, $d\in[D_{i,j}]$.
Then, $\matdet{H}_{i,j}=\sum_{d=1}^{D_{i,j}} \matdet{I}^{p_{i,j}^{d}}$. This code is represented by a Tanner graph
where there are multiple edges of different types between the variable  and the check nodes.
We say this code has type $D= \max_{i, j} D_{i,j} \geq 1$. A single-edge PB QC-LDPC code is type 1.

\begin{example}\label{EX_2}
Consider $\lambda=2$, $\omega=3$,
   \begin{IEEEeqnarray*}{rClrCl}
    \matdet B &=& \begin{pmatrix}
         0& (1,1)& 1  \\
         1& 0& (1,1,1)
         \end{pmatrix}, \text{ and}
         \\
    \matdet P &=& \begin{pmatrix}
         -1& \Bigl(p_{1,2}^{1},p_{1,2}^{2} \Bigr)& p_{1,3}  \\
         p_{2,1}& -1& \Bigl(p_{2,3}^{1},p_{2,3}^{2},p_{2,3}^{3} \Bigr)
         \end{pmatrix}.
\end{IEEEeqnarray*}
Then
   \begin{IEEEeqnarray*}{rCl}
        \matdet H &=& \begin{pmatrix}
         \matdet{I}^{-1}& \matdet{I}^{p_{1,2}^{1}}+\matdet{I}^{p_{1,2}^{2}}& \matdet{I}^{p_{1,3}}  \\
         \matdet{I}^{p_{2,1}}& \matdet{I}^{-1}&  \matdet{I}^{p_{2,3}^{1}}+\matdet{I}^{p_{2,3}^{2}}+\matdet{I}^{p_{2,3}^{3}}
    \end{pmatrix}.
    \qed
\end{IEEEeqnarray*}
\end{example}

\subsection{Construction for Non-binary Codes}
In the non-binary codes, the entries of codewords, parity-check, and generator matrices are in a finite
field $\mathbb{F}_q$, where $q$ is a power of a prime.
There are several ways to construct non-binary PB QC-LDPC codes.
The base matrix $\matdet B \in \mathbb{F}_2^{\lambda \times \omega}$ typically remains binary, as defined as in Appendix~\ref{subsubsec:qc-ldpc}.
We use the unconstrained and random assignment strategy in~\cite{Dolecek2014} to extend $\matdet B$ to a PCM.

There is significant flexibility in selecting the edge weights for constructing non-binary QC-LDPC codes, which can be classified 
as constrained or unconstrained \cite[Sec. II]{Dolecek2014}. In this work, we focus on an unconstrained and random assignment strategy, where each edge in $T_{\mathcal{C}}$ corresponding to a $1$ in the binary matrix $B$ is replaced with a coefficient $h_{i,j} \in \mathbb{F}_q$. Alternatively, these coefficients could be selected based on predefined rules to ensure appropriate edge-weight diversity, which can lead to enhanced performance. This methodology allows non-binary codes to retain the structural benefits of binary QC-LDPC codes while extending their functionality to finite fields, offering improved error-correcting capabilities for larger $q$ values.

\subsection{Encoder and Decoder}
The generator matrix of the code is obtained by applying Gaussian elimination in the binary field to \eqref{eq:H-block}. The encoder is then implemented efficiently using shift registers \cite{Zongwang2006}.

The QC-LDPC codes are typically decoded using belief propagation (BP), as described in Section \ref{sec:wbp}.

\section{Spatially Coupled LDPC Codes}
\label{sec:sc-code}

\subsection{Construction}
\label{Outer_code}

The SC-QC-LDPC codes in this paper are constructed based on the edge spreading process \cite{mitchell17}.
Denote the PCM of the constituent PB QC-LDPC code by $\matdet{H}(\lambda, \omega, M, \matdet P)$.
Denote the PCM of the corresponding SC-QC-LDPC code by $ \matdet{H}_{\text{SC}} \allowbreak:=\allowbreak \matdet{H}_{\text{SC}}\allowbreak \left(\lambda,\allowbreak\omega, M, \matdet P,  m_s, L, \allowbreak\bar{\matdet B}\right)$, with the additional parameters of the syndrome memory $m_s\in\mathbb N$, coupling length $L\allowbreak\in \allowbreak\mathbb N$, and the spreading matrix $\bar{\matdet B}\in[-1{:}m_s]^{\lambda\times\omega}$.
Then, $\matdet{H}_{\text{SC}}\allowbreak\in \allowbreak\mathbb{F}_2^{\lambda M(m_s+L+1)\times \omega M L}$ is given by
\begin{center}
\includegraphics[width=0.5\textwidth]{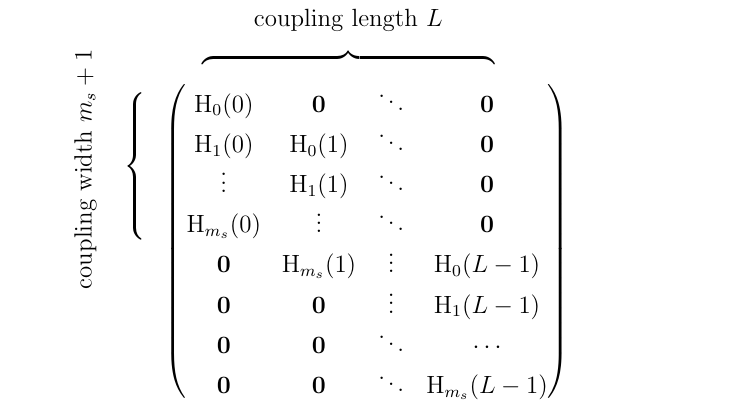}
\end{center}
in which $\matdet{H}_{t}(\ell),\mathbf{0} \in \mathbb{F}_2^{\lambda M\times \omega M}$ are
$\lambda\times\omega$ block matrices, $t\in[0:m_s]$, $\ell\in[0,L-1]$ and $\matdet{H}_{t}(\ell)$ is obtained from $\bar{\matdet B}$ as
\begin{IEEEeqnarray}{rCl}
    \matdet{H}_{t}(\ell) &&
\text{~at row-block $i$ and column-block $j$}
\nonumber\\
&=&
\begin{cases}
    \matdet{I}^{p_{i, j}}, & \text{if~}\bar{\matdet B}_{i, j} =t,\\
    \mathbf{0}, & \text{otherwise},
\end{cases}
\label{eq:Htl}
\end{IEEEeqnarray}
where $\matdet{I}^{p_{i, j}},\mathbf{0}\in \mathbb{F}_2^{M\times M}$.

If $\matdet{H}_{t}\left({\ell}_1\right)=\matdet{H}_{t}\left({\ell}_2\right)$, $\forall t\in [0,m_s]$ and $\forall \ell_1 , \ell_2\in [0,L-1]$,
the code is time-invariant. If $\matdet H$ and  $\matdet{H}_{\text{SC}}$ are full-rank,
then the rate of SC-QC-LDPC code is $r= r_{\text{QC}} - \frac{m_{s}\lambda}{L\omega}$,
where $r_{\text{QC}} = 1 - \frac{\lambda}{\omega}$
is the rate of the component QC-LDPC code.
For a fixed $m_s$, as  $L\rightarrow \infty$, then $r\rightarrow r_{\text{QC}}$.
Thus, the rate loss in SC-LDPC codes can be reduced by increasing the coupling length.

\begin{example}
\label{EX_3}
Consider any QC-LDPC code, $m_s = 2$, $L=4$ and the spreading matrix
\begin{IEEEeqnarray*}{rCl}
\bar{\matdet B}=
\begin{pmatrix}
    -1& 1& 0& 2& -1  \\
1& -1& 1& 0& -1 \\
1& 2& -1& 0& 2
\end{pmatrix}
\end{IEEEeqnarray*}
Then, $\matdet{H}_2(\ell)$ is obtained by replacing each entry of  $\bar{\matdet B}$ that is $2$ at row $i$ and column $j$ with $\matdet{I}^{p_{i,j}}$,
and other entries with $\mathbf 0$. Thus, for all $\ell\in[0:3]$
\begin{IEEEeqnarray*}{rCl}
\matdet{H}_{2}(\ell)&=&
    \begin{pmatrix}
    \mathbf{0} & \mathbf{0}& \mathbf{0}& \matdet{I}^{p_{1,4}}& \mathbf{0}  \\
    \mathbf{0}& \mathbf{0}& \mathbf{0}& \mathbf{0}& \mathbf{0} \\
    \mathbf{0} & \matdet{I}^{p_{3,2}}& \mathbf{0}& \mathbf{0}& \matdet{I}^{p_{3,5}} 
    \end{pmatrix}.
\end{IEEEeqnarray*}
In a similar manner,
\begin{IEEEeqnarray*}{rClrCl}
  \matdet{H}_{0}(\ell) &=&
 \begin{pmatrix}
    \mathbf{0} & \mathbf{0}& \matdet{I}^{p_{1,3}}& \mathbf{0}& \mathbf{0}  \\
    \mathbf{0} & \mathbf{0}& \mathbf{0}& \matdet{I}^{p_{2,4}}& \mathbf{0} \\
    \mathbf{0} & \mathbf{0}& \mathbf{0}& \matdet{I}^{p_{3,4}}& \mathbf{0} 
    \end{pmatrix},
        \\
   \matdet{H}_{1}(\ell)&=&
   \begin{pmatrix}
    \mathbf{0} & \matdet{I}^{p_{1,2}}& \mathbf{0}& \mathbf{0}& \mathbf{0}  \\
    \matdet{I}^{p_{2,1}}& \mathbf{0}& \matdet{I}^{p_{2,3}}& \mathbf{0}& \mathbf{0} \\
    \matdet{I}^{p_{3,1}}& \mathbf{0}& \mathbf{0}& \mathbf{0}& \mathbf{0} \\
    \end{pmatrix}.
\end{IEEEeqnarray*}
If $\matdet{H}_{\text{SC}}$ is full-rank, then $r = r_{\text{QC}} - \frac{2\cdot 3}{4 \cdot 5}$.
If $\matdet{H}_t(\ell)$ is also full-rank, then $r_{\text{QC}}=0.4$ and $r = 0.1$. \qed
\end{example}

The SC-LDPC code is  efficiently  encoded sequentially \cite{Tazoe2012} so that at each
spatial position $\ell$, $(\omega - \lambda)M$ information bits are encoded out of $(\omega - \lambda) ML$.

If an entry of $\matdet B$ corresponding to an $\matdet{H}_i(\ell)$ is a sequence, the corresponding entry in $\Bar{\matdet B}$ is also a sequence. Thus, if $\matdet{H}_i(\ell)$ represents a multi-edge code for some $i$ or $\ell$, so does $\matdet{H}_{\text{SC}}$.

\begin{figure*}[t]
  \centering
    \begin{tabular}{c@{~~~~~~~~~~~}c}
  \includegraphics[width=0.28\textwidth]{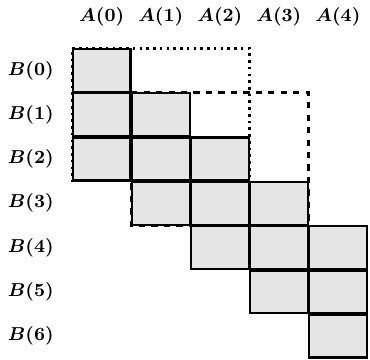} &
    \includegraphics[width=0.5\textwidth]{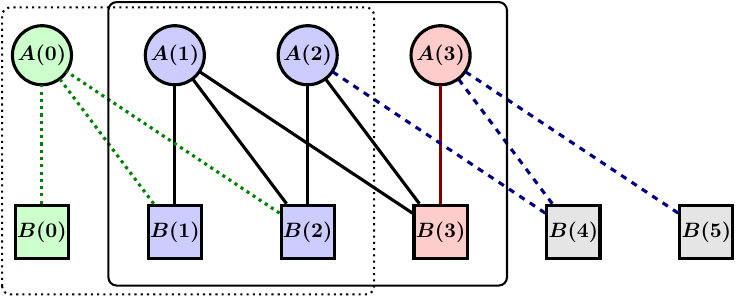}\\[1em]
  (a)  & (b)
  \end{tabular}
  \caption{%
(a) 
A window in SWD is a $m_w\times m_w$ block, each block with size $\lambda M\times \omega M$. The window slides diagonally by one block.
The variable nodes in the block at the block-position $i\times j$ are denoted by $\matdet A(i)$, and the check nodes by $\matdet B(j)$.
Here, $m_s=2$ and $m_w=3$.
(b) The message update for the current window $\ell$ is denoted by the solid rectangle.
The variable nodes in the previous windows $\ell'<\ell$ and not in the current window $\ell$, denoted in green,
send fixed messages $\matdet L^{(T_{\ell-1}, \ell-1)}_{v2c}$ to the check nodes in the current window in any iterations $t$.
The blue variable nodes in the current and a previous window send $\matdet L^{(T_{\ell-1}, \ell-1)}_{v2c}$
to the check nodes in the current window at iteration $t= 1$,
which would be updated according to \eqref{eq:check-node-update}--\eqref{eq:var-node-update} for $t \geq 2$.
The red variable nodes in the current window, but not in any previous window, send $\matdet L(\vc y)$ to the check nodes
in the current window at $t= 1$, which would be updated at $t \geq 2$.
The edges from the gray check nodes in windows $\ell'>\ell$  are discarded for the decoding at position $\ell$.}
  \label{fig:WinDec}
\end{figure*}

\subsection{Encoder}
The SC-LDPC codes are encoded with the sequential encoder \cite{Tazoe2012}.

\subsection{Sliding Window Decoder}
Consider the SC LDPC code $\matdet{H}_{\text{SC}}\allowbreak(\lambda,\allowbreak\omega, M,\allowbreak \matdet{P}, m_s, L, \allowbreak\bar{\matdet B})$ in Appendix
 \ref{Outer_code}.
Note that any two variable-nodes in the Tanner graph of the code whose corresponding columns in $\matdet{H}_{\text{SC}}$ are at least
$\left(m_s + 1\right)M\omega$ columns apart do not share any common check-nodes. Thus, they are not involved in the same parity-check equation.
The SWD uses this property and runs a local BP decoder on windows of $\matdet{H}_{\text{SC}}$ shown in  Figure~\ref{fig:WinDec}.

SWD works through a sequence of spatial iterations $\ell$, where a rectangular window slides from the top-left to the bottom-right side of  $\matdet{H}_{\text{SC}}$.
In general, a window matrix $\matdet{H}_{w}$ of size $m_w$ consists of $m_w \lambda M$ consecutive rows
and  $m_w \omega M$ consecutive columns in $\matdet{H}_{\text{SC}}$. At each iteration $\ell$, it moves $\lambda M$ rows down and $\omega M$
columns to the right in $\matdet{H}_{\text{SC}}$. Thus, a window is an $m_w\times m_w$ block, starting from the top left and moving diagonally one block down per iteration.
There is a special case where the window reaches the boundary.
The way the windows near the boundary are terminated impacts performance \cite{Lentmaier2011,ali2017improving}.
Our setup for window termination at boundary is \textit{early termination}, which is  discussed in section III-B1~\cite{ali2017improving}.

Denote the variable nodes in window $\ell$ by
\[
V(\ell)= \Bigl\{ v_i\in V: i= (\ell-1)\omega M, \cdots, (m_{w}+\ell-1)\omega M-1\Bigr\}.
\]
The check-nodes directly connected to $V(\ell)$ are  $C(\ell)$.
Define $\tilde{V}(\ell) = \cup_{\ell'<\ell}\bigl\{v\in V(\ell')\setminus V(\ell), \: v\text{~connected to~}C(\ell) \bigr\}$
(the variable nodes in the previous windows not in the current window, shown in green in Figure~\ref{fig:WinDec})
$\bar{V}(\ell) = \cup_{\ell'<\ell}\bigl\{v\in V(\ell')\cap V(\ell), \: v\text{~connected to~}C(\ell) \bigr\}$
(the variable nodes in the current window and any previous window, shown in blue in Figure~\ref{fig:WinDec}),
and $\hat{V}(\ell) = \cup_{\ell'<\ell}\bigl\{v\in V(\ell)\setminus V(\ell'),$ $ v\text{~connected to~}C(\ell) \bigr\}$
(the variables nodes in the current window not in any previous window, shown in red in Figure~\ref{fig:WinDec}).
Let $\matdet L^{(t,\ell)}_{v2c}$ and $\matdet L^{(t,\ell)}_{c2v}$ be LLRs in window $\ell$ and  iteration $t \in[T_{\ell}]$ in BP.
At $t=1$, the BP is initialized as
\begin{IEEEeqnarray}{rCl}
\matdet L^{(1, \ell)}(v)=
\begin{cases}
\matdet L^{(T_{\ell-1}, \ell-1)}(v) , & v\in \tilde{V}(\ell)\cup \bar{V}(\ell),\\
\matdet L\left(\vc y\right), &  v\in \hat{V}(\ell)
\end{cases}
\IEEEeqnarraynumspace
\label{eq:SC-update}
\end{IEEEeqnarray}
The update equation for  the variable node is
\begin{IEEEeqnarray*}{rCl}
\matdet L^{(t, \ell)}_{v2c}=
\begin{cases}
\matdet L^{(T_{\ell-1}, \ell-1)}_{v2c}, & v\in \tilde{V}(\ell),\\
\matdet L\left(\vc y\right){+}\hspace*{-5mm} \sum\limits_{c' \in \matdet{C}_{v} \setminus \{c\}}
\hspace*{-4mm}\matdet{L}^{(t-1),\ell}_{c'2v} , & v\in \bar{V}(\ell) \\
\matdet L\left(\vc y\right), &  v\in \hat{V}(\ell).
\end{cases}
\IEEEeqnarraynumspace
\end{IEEEeqnarray*}
The update relation for  $\matdet L^{(t,\ell)}_{c2v}$ is given by \eqref{eq:check-node-update}, with no weights, applied for $c\in C(\ell)$ and  $v\in V(\ell)$.
After $T_{\ell}$ iterations, the variables in the window $\ell$, called target symbols, are decoded.
The SWD is illustrated in Figure~\ref{fig:WinDec}.

\section{Parameters of Codes}
\label{qc-ldpc-prmts}

For  low-rate codes, first,
the degree distributions of the Tanner graph are determined using the extrinsic information transfer (EXIT) chart \cite{Land2014}.
The EXIT chart produces accurate results if  $n\ra\infty$ \cite{Land2014}.
For  high-rate codes, we apply the stochastic EXIT chart, which in the short block length regime
yields better coding gains compared to the deterministic variant \cite{Koike20}. For instance, while the EXIT chart suggests that
the degree distributions of $\mathcal{C}_6$ are optimal
 near $E_b/N_0 = 3$ dB,
the stochastic EXIT chart in Figure~\ref{fig:StEXIT Chart_C12} suggests $E_b/N_0 = 3.5$ dB.
Indeed, at $E_b/N_0 = 3$ dB, the check-node extrinsic information $I_C$ intersects with variable-node extrinsic information $I_V$ for the deterministic EXIT chart.
The exponent matrices are obtained using the PEG algorithm, which takes the optimized degree distribution polynomials.

\begin{figure}[t]

\includegraphics[width=0.457\textwidth]{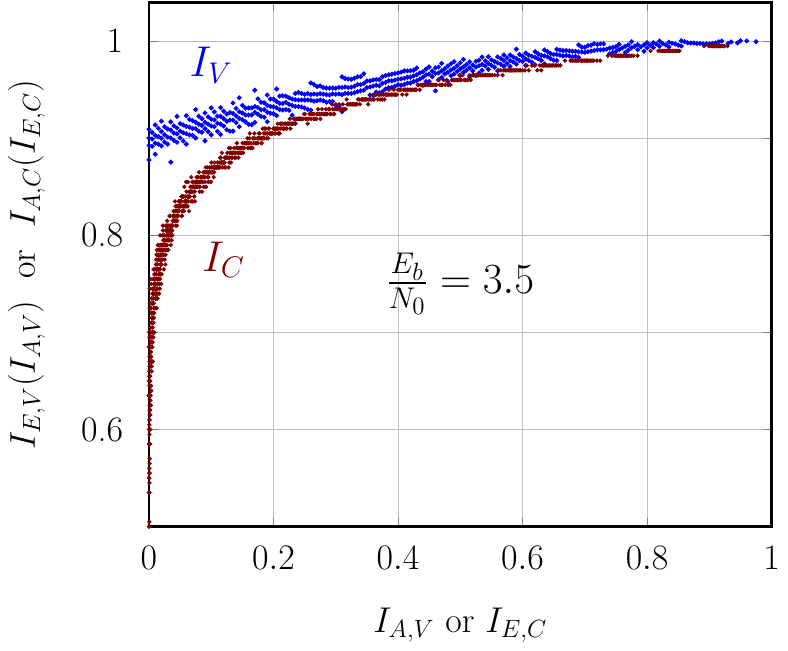}
    \caption{Stochastic EXIT chart for the high-rate code $\mathcal{C}_{6}$ at ${E_b}/{N_0}=3.5$ dB, for the AWGN channel.}
        \label{fig:StEXIT Chart_C12}
\end{figure}

The matrices below are vectorized row-wise. They can be unvectorized considering their dimensions.

\def\yshift{2mm}
\def\yyyshift{3pt}
\newcommand{\bme}{\\&&}
\newcommand{\mywidth}{0.5\textwidth}

\noindent $\mathcal{C}_2$:    $\lambda = 4$, $\omega = 8$, $M = 403$, $\Lambda(x)=x^3$, $\Upsilon(x)=x^{7}$,  $\matdet P_2$ below  

\noindent $[
345\;152\;72\;376\;377\;197\;4\;144\;187\;398\;320\;225\;330\;198\;79 \\ 289 \; 271\;165\;259\;105\;288\;254\;51  \;236\;111\;233\;380\;332\;47\;76\\ 222\;247]$

\vspace{0.5\yshift}
\noindent\rule{40mm}{0.85pt}
\vspace{0.5\yshift}

\noindent $\mathcal{C}_3$: $\lambda = 5$, $\omega = 16$, $M = 251$, $\Lambda(x)=x^4$, $\Upsilon(x)=x^{15}$,   $\matdet P_3$ below 

\vspace{\yyyshift}

\noindent $[
6\;98\;208\;177\;76\;76\;76\;48\;111\;76\;76\;34\;76\;76\;64\;85\;198\;42\\
155\;127\;29\;32\;35\;10\;76\;44\;47\;8\;53  \;56\;47\;71\;31\;211\;158\;0\\
238\;111\;199\;8\;195\;248\;121\;167\;46\;170\;246\;140\;117\;51\;3\;65\\
57\;\:150\;\:243 \; \;57\;213\;20\;113\;164\;48\;141\;222\;85\;181\;142\;121\\
210\;229\;98\;218\;59\;242\;76\;196\;23\;185\;54\;162\;52 ]$

\vspace{0.5\yshift}
\noindent\rule{40mm}{0.85pt}
\vspace{0.5\yshift}

\noindent $\mathcal{C}_6$: $\lambda = 7$, $\omega = 42$, $M = 25$, $\Lambda(x)=0.714x^2+0.286x^3$, $\Upsilon(x)=0.857x^{18}+0.143x^{23}$,  $\matdet P_6$ below

\vspace{\yyyshift}

\noindent $\bm{[}0\;0\;0\;0\;0\;0\;0\;0\;0\;0\;0\;0\;0\;0\;0\;0\;0\;0\;0\;0\;0\;0\;0\;0\;-1\;-1\;-1\;-1\;-1\;-1\;-1\;-1\;-1\;-1\;-1\;-1\;-1\;-1\;-1\;-1\;-1\;-1\;1\;3\;8\;18\;7\;17\;22\;24\;-1\;-1\;-1\;-1\;-1\;16\;-1\;-1\;-1\;-1\;-1\;-1\;14\;-1\;-1\;-1\;-1\;19\;4\;-1\;-1\;22\;-1\;8\;-1\;-1\;5\;-1\;-1\;9\;0\;14\;-1\;15\;3\;12\;7\;11\;14\;18\;13\;22\;-1\;-1\;-1\;-1\;-1\;-1\;-1\;17\;-1\;-1\;-1\;-1\;-1\;-1\;15\;-1\;-1\;3\;-1\;18\;-1\;-1\;-1\;9\;8\;-1\;-1\;19\;9\;20\;-1\;21\;5\;-1\;13\;-1\;-1\;-1\;-1\;-1\;-1\;-1\;2\;6\;16\;11\;14\;9\;19\;23\;12\;-1\;-1\;-1\;-1\;-1\;-1\;-1\;20\;-1\;19\;-1\;-1\;-1\;17\;-1\;-1\;-1\;8\;16\;18\;4\;3\;-1\;17\;-1\;-1\;-1\;15\;-1\;-1\;-1\;-1\;-1\;6\;24\;14\;22\;3\;11\;1\;19\;-1\;-1\;13\;-1\;-1\;-1\;-1\;-1\;18\;-1\;18\;-1\;15\;-1\;-1\;-1\;-1\;13\;11\;19\;-1\;-1\;20\;17\;-1\;21\;-1\;-1\;-1\;-1\;17\;-1\;-1\;-1\;-1\;13\;-1\;-1\;-1\;-1\;-1\;-1\;3\;9\;24\;4\;21\;1\;16\;22\;14\;-1\;-1\;14\;23\;24\;11\;-1\;16\;23\;-1\;-1\;-1\;-1\;-1\;-1\;7\;23\;-1\;-1\;-1\;-1\;-1\;-1\;18\;-1\;-1\;-1\;-1\;15\;-1\;-1\;-1\;-1\;9\;11\;21\;8\;17\;4\;14\;16\;-1\;11\;-1\;6\;11\;13\;13\;20\;6\;13\;-1\;-1\;16\;-1\;-1\;-1\;-1\;-1\bm{]}$

%%%%%%%%%%%%%%%%
\vspace{0.5\yshift}
\noindent\rule{40mm}{0.85pt}
\vspace{0.5\yshift}

\noindent $\mathcal{C}_7$:  $\lambda = 8$, $\omega = 42$, $M = 25$, $\Lambda(x)=0.596x^2+0.404x^3$, $\Upsilon(x)=0.125x^{9}+0.125x^{16}+0.5$ \\ 
$x^{17}+0.125x^{19}+0.125x^{23}$,  $\matdet P_7$ below   

\vspace{\yyyshift}

\noindent $\bm{[}0\;0\;0\;0\;0\;0\;0\;0\;0\;0\;0\;0\;0\;0\;0\;0\;0\;0\;0\;0\;0\;0\;0\;0\;-1\;-1\;-1\;-1\;-1\;-1\;-1\;-1\;-1\;-1\;-1\;-1\;-1\;-1\;-1\;-1\;-1\;-1\;1\;3\;8\;18\;7\;17\;22\;24\;-1\;-1\;-1\;-1\;-1\;16\;-1\;-1\;-1\;-1\;-1\;-1\;14\;-1\;-1\;-1\;-1\;20\;-1\;16\;-1\;-1\;-1\;-1\;4\;-1\;10\;-1\;-1\;20\;19\;-1\;2\;9\;3\;12\;7\;11\;14\;18\;13\;22\;-1\;-1\;-1\;-1\;-1\;-1\;-1\;17\;-1\;-1\;-1\;-1\;-1\;-1\;15\;-1\;-1\;-1\;-1\;22\;6\;-1\;-1\;13\;-1\;21\;4\;-1\;-1\;-1\;-1\;14\;1\;17\;13\;-1\;-1\;-1\;-1\;-1\;-1\;-1\;2\;6\;16\;11\;14\;9\;19\;23\;12\;-1\;-1\;-1\;-1\;-1\;-1\;-1\;-1\;-1\;-1\;5\;-1\;-1\;3\;-1\;10\;-1\;23\;-1\;-1\;8\;4\;-1\;2\;-1\;-1\;-1\;15\;-1\;-1\;-1\;-1\;-1\;6\;24\;14\;22\;3\;11\;1\;19\;-1\;-1\;13\;-1\;-1\;-1\;-1\;-1\;20\;-1\;-1\;-1\;-1\;13\;-1\;-1\;-1\;-1\;0\;3\;3\;-1\;12\;3\;7\;-1\;-1\;-1\;-1\;-1\;17\;-1\;-1\;-1\;-1\;13\;-1\;-1\;-1\;-1\;-1\;-1\;3\;9\;24\;4\;21\;1\;16\;22\;-1\;-1\;12\;-1\;6\;-1\;-1\;6\;-1\;-1\;-1\;-1\;0\;6\;18\;19\;-1\;6\;-1\;-1\;-1\;-1\;-1\;-1\;18\;-1\;-1\;-1\;-1\;15\;-1\;-1\;-1\;-1\;9\;11\;21\;8\;17\;4\;14\;16\;23\;1\;3\;-1\;24\;10\;0\;-1\;-1\;23\;-1\;0\;-1\;-1\;-1\;23\;-1\;19\;-1\;-1\;-1\;-1\;-1\;-1\;-1\;-1\;-1\;-1\;-1\;-1\;-1\;-1\;-1\;-1\;-1\;-1\;-1\;-1\;-1\;-1\;-1\;-1\;8\;5\;20\;-1\;-1\;7\;10\;24\;21\;10\;-1\;11\;22\;-1\;-1\;-1\;-1\;-1\bm{]}$

%%%%%%%%%%%%%%%%
\vspace{0.5\yshift}
\noindent\rule{40mm}{0.85pt}
\vspace{0.5\yshift}

\noindent   $\mathcal{C}_8$: $\lambda = 6$, $\omega = 60$, $M = 71$, $\Lambda(x)=0.1x+0.634x^2+0.266x^3$, $\Upsilon(x)=0.166x^{27}+0.668$ \\
$x^{30}+0.166x^{37}$,  $\matdet P_8$ below 

\vspace{\yyyshift}

\noindent $\bm{[}0\;0\;0\;0\;0\;0\;0\;0\;0\;0\;0\;0\;0\;0\;0\;0\;0\;0\;0\;0\;0\;0\;0\;0\;0\;0\;0\;0\;0\;0\;0\;0\;$ 
$-1\;-1\;-1\;-1\;-1\;61\;-1\;-1\;-1\;-1\;-1\;-1\;-1\;-1\;-1\;-1\;-1\;-1\;-1\;-1\;62\;-1\;-1\;20\;60\;24\;37\;-1\;1\;17\;45\;52\;57\;62\;63\;68\;70\;54\;26\;19\;14\;9\;8\;3\;0\;-1\;-1\;-1\;-1\;-1\;-1\;28\;-1\;-1\;-1\;-1\;-1\;-1\;-1\;-1\;-1\;46\;-1\;-1\;-1\;-1\;34\;51\;-1\;-1\;59\;-1\;18\;-1\;-1\;-1\;61\;-1\;52\;46\;21\;48\;-1\;-1\;44\;34\;-1\;2\;48\;6\;1\;17\;40\;29\;11\;67\;23\;65\;70\;54\;31\;42\;60\;4\;-1\;0\;-1\;-1\;-1\;-1\;2\;-1\;-1\;-1\;-1\;-1\;-1\;-1\;-1\;-1\;46\;-1\;-1\;-1\;-1\;-1\;21\;-1\;23\;-1\;-1\;7\;-1\;-1\;53\;-1\;-1\;62\;-1\;32\;27\;27\;39\;1\;-1\;15\;-1\;16\;0\;-1\;-1\;-1\;-1\;-1\;-1\;16\;-1\;-1\;-1\;-1\;-1\;-1\;-1\;-1\;3\;51\;64\;14\;29\;44\;47\;62\;68\;20\;7\;57\;42\;27\;24\;9\;-1\;-1\;-1\;32\;29\;-1\;-1\;-1\;-1\;38\;-1\;34\;-1\;45\;-1\;67\;-1\;-1\;60\;40\;-1\;37\;-1\;9\;20\;-1\;69\;20\;-1\;0\;-1\;-1\;-1\;-1\;21\;-1\;-1\;-1\;-1\;-1\;-1\;-1\;-1\;-1\;2\;18\;3\;51\;49\;16\;33\;59\;69\;53\;68\;20\;22\;55\;38\;12\;-1\;-1\;62\;-1\;-1\;-1\;-1\;48\;0\;37\;48\;-1\;26\;19\;59\;60\;49\;38\;-1\;-1\;-1\;-1\;67\;-1\;-1\;-1\;24\;-1\;-1\;-1\;0\;1\;40\;42\;-1\;-1\;-1\;-1\;-1\;-1\;-1\;-1\;-1\;-1\;-1\;-1\;67\;14\;48\;55\;-1\;-1\;-1\;-1\;-1\;-1\;-1\;-1\;-1\;-1\;25\;67\;16\;41\;64\;0\;68\;66\;50\;20\;63\;67\;34\;45\;0\;7\;29\;11\;11\;-1\;-1\;-1\;61\;-1\;-1\;-1\;-1\;-1\bm{]}$

%%%%%%%%%%%%%%%%
\vspace{0.5\yshift}
\noindent\rule{40mm}{0.85pt}
\vspace{0.5\yshift}

\noindent  $\mathcal{C}_{10}$:   $\lambda = 4$, $\omega = 50$, $M = 80$, $\Lambda(x)=0.02+0.18x\:+\:0.64x^2\:+\:0.16x^3$, $\Upsilon(x)=0.25x^{29}$ \\
$0.25x^{33}+0.25x^{34}+0.25x^{47}$,   $\matdet P_{10}$ below

\vspace{\yyyshift}

\noindent $\bm{[}0\;0\;0\;0\;0\;0\;0\;0\;0\;0\;0\;0\;0\;0\;0\;0\;0\;0\;0\;0\;0\;0\;0\;0\;$
$0\;0\;0\;0\;0\;0\;0\;0\;0\;0\;0\;0\;0\;0\;0\;0\;0\;0\;0\;0\;0\;0\;0\;0\;-1\;-1\;1\;2\;8\;-1\;53\;70\;-1\;78\;72\;-1\;27\;-1\;3\;6\;-1\;-1\;79\;50\;77\;-1\;-1\;-1\;-1\;30\;7\;14\;-1\;-1\;51\;-1\;73\;66\;-1\;-1\;29\;-1\;18\;36\;64\;-1\;74\;-1\;62\;44\;16\;60\;-1\;20\;37\;-1\;3\;67\;78\;41\;31\;26\;77\;13\;-1\;39\;49\;54\;9\;-1\;-1\;43\;-1\;-1\;71\;-1\;6\;37\;-1\;-1\;21\;69\;66\;47\;57\;22\;59\;11\;14\;33\;23\;58\;-1\;-1\;44\;18\;-1\;68\;-1\;-1\;36\;62\;-1\;12\;-1\;79\;42\;-1\;-1\;23\;-1\;-1\;38\;79\;-1\;57\;-1\;-1\;-1\;3\;70\;69\;54\;-1\;-1\;77\;10\;11\;26\;68\;-1\;7\;30\;1\;-1\;28\;-1\;73\;50\;-1\;-1\;52\;36\;18\;20\;14\;4\;72\;44\;62\;60\;66\;76\;8\;34\;-1\bm{]}$

%%%%%%%%%%%%%%%%
\vspace{0.5\yshift}
\noindent\rule{40mm}{0.85pt}
\vspace{0.5\yshift}

\noindent $\mathcal{C}_{11}$:  $\lambda = 2$, $\omega = 25$, $M = 32$, $\Lambda(x)=x$, $\Upsilon(x)=x^{24}$,  $\matdet P_{11}$ below

\vspace{\yyyshift}

\noindent $[(0,17)\;(-1)\;(0,20)\;(-1)\;(0,21)\;(-1)\;(0)\;(0)\;(0)\;(0)\;(0)\;(0) \\
(0)\;(0)\;(0)\;(0)\;(0)\;(0)\;(0)\;(0) \;(0)\;(0)\;(0)\;(0)\;(0)\:\;(-1)\;(0,17) \\
(-1)\;\:(0,20)\;(-1)\;\:(0,21)\;(0)\;(1)\;(2)\;(3)\;(4)\;(5)\;(6)\;(7)\;(8) \\ 
\;(9)\;(10)\;(16)\;(25)\;(26)\;(27)\;(28)\;(29)\;(30)\;(31) ]$

%%%%%%%%%%%%%%%%
\vspace{0.5\yshift}
\noindent\rule{40mm}{0.85pt}
\vspace{0.5\yshift}

\noindent   $\mathcal{C}_{12}$:  $\lambda = 1$, $\omega = 23$, $M = 160$, $\Lambda(x)\:=\:0.480x\: +\: 0.130x^2\: +\: 0.217x^3, 0.130x^4\: +\: 0.043x^5$\\
 $\Upsilon(x)=x^{71}$, $\matdet P_{12}$ below  

\vspace{\yyyshift}

\noindent $[(1,151,151,55,127,151)\;(138,27,139,144)\;\:(88,57,-1) \\
(111,-1,47)\;\:(130,15,33)\;(11,47, 118,15,108)\;\:(109,5,-1) \\
(143,-1,140)\;\:(100,14,14,141,-1)\;\:(12,-1,20)\;\:(91,42,96) \\
(74,-1,72) \;(54,29,155,157,159)\;(83,82,-1)\;(0,77,141,78,13)\\
(112,-1,59)\;(119,74,56)\;(48 ,6,55,157,-1)\;(85,-1,9) \\ 
(41,80,121,2,-1)\;(103,-1,45)\;(60,117,52,87)\;(99,148,-1)]$

\vspace{\yshift}
\noindent  $m_s=2$, $\Bar{\matdet B}_{12}$ below

\vspace{\yyyshift}

\noindent $[(0,0,1,2,2,2)\;(0,1,1,2)\;(0,1,-1)\;(0,-1,2)\;(0,1,2) \\ 
(0,0,0,1,2) (0,1,-1)\;(0,-1,2)\;(0,1 ,1,1,-1)\\
(0,-1,2) (0,1,2) (0,-1,2) (0,1,2,2,2)\;(0,1,-1) \\
(0,0,0,1,2) (0,-1,2) (0,1,2) (0,1,1,1,-1), (0,-1,2) \\
(0,0,0,1,-1) (0,-1,2) (0,1,2,2) (0,1,-1) ]$

%\bibliographystyle{IEEEtran}

%\bibliography{Refs.bib} 

%%%%%% bbl file %%%%%%%%%%%%%%%%%%%%%%%%%%%%%%%%%%%%%

% Generated by IEEEtran.bst, version: 1.14 (2015/08/26)

%%%%%%%%%%%%%%%%%%%%%%%%%%%%%%%%%%%%%%%%%%%%%%%%%%%%%%

\end{document}